\documentclass[10pt,conference]{IEEEtran}
\IEEEoverridecommandlockouts
\usepackage{cite}
\usepackage{amsmath,amssymb,amsfonts}
\usepackage[linesnumbered, ruled]{algorithm2e}
\usepackage{graphicx}
\usepackage{textcomp}
\usepackage{xcolor}
\usepackage{amsthm}
\usepackage{enumerate}
\usepackage{scalerel}
\usepackage{array}
\usepackage[hidelinks]{hyperref}
\usepackage{subfigure}
\usepackage{bbm}
\usepackage{dsfont}
\usepackage{url}
\usepackage{booktabs}
\usepackage{makecell}
\usepackage[table]{xcolor}

\def\BibTeX{{\rm B\kern-.05em{\sc i\kern-.025em b}\kern-.08em
    T\kern-.1667em\lower.7ex\hbox{E}\kern-.125emX}}

\newtheorem{theorem}{Theorem}

\newtheorem{proposition}{Proposition}
\newtheorem{definition}{Definition}

\DeclareRobustCommand*{\IEEEauthorrefmark}[1]{%
  \raisebox{0pt}[0pt][0pt]{\textsuperscript{\footnotesize #1}}%
}

\begin{document}

\title{Fault-Tolerant Offloading Framework for Real-Time Applications in Mobile Edge Computing
\thanks{This work was supported by the MoE Tier-2 grant MOE-T2EP20224-0007. The work was also supported by DesCartes, National Research Foundation, Prime Minister’s Office, Singapore under its Campus for Research Excellence and Technological Enterprise (CREATE) programme.}
}

\author{
\IEEEauthorblockN{Chuanchao Gao\IEEEauthorrefmark{1,2}
, Yiyang Gao\IEEEauthorrefmark{1}, Michael Yuhas\IEEEauthorrefmark{1,2} Arvind Easwaran\IEEEauthorrefmark{1,2}}
\IEEEauthorblockA{\IEEEauthorrefmark{1}\textit{College of Computing and Data Science}}
\IEEEauthorblockA{\IEEEauthorrefmark{2}\textit{Energy Research Institute @ NTU, Interdisciplinary Graduate Programme}}
\IEEEauthorblockA{
\textit{Nanyang Technological University, Singapore}\\
gaoc0008@e.ntu.edu.sg, yiyang009@e.ntu.edu.sg, 
michaelj004@e.ntu.edu.sg, arvinde@ntu.edu.sg}
}

\maketitle

\begin{abstract}
This paper studies a Deadline-constrained task Offloading and resource Allocation Problem (DOAP) in a backhaul-network-enabled Mobile Edge Computing (MEC) system with both bandwidth and computational resource constraints. The goal is to maximize the total inference accuracy improvement across all tasks. We formulate this problem, denoted as $\mathbf{P}$, as an Integer Linear Programming (ILP) problem and show it is NP-hard via a reduction from the maximum-weight three-dimensional matching problem. To address $\mathbf{P}$, we develop $\mathtt{FastSA}$, a linear-time $\frac{1}{10}$-approximation algorithm based on the local-ratio technique. To the best of our knowledge, $\mathtt{FastSA}$ provides the first constant-factor performance guarantee for backhaul-network-enabled DOAPs. In addition, we introduce a fault detection and recovery framework that enables robust MEC operation under backhaul link and edge server failures. Experiments using profiled image classification workloads and real-world taxi traces show that $\mathtt{FastSA}$ achieves an average accuracy improvement of $25.6\%$ over benchmark algorithms under fault-free conditions while maintaining the lowest runtime. Under fault-injected scenarios, our framework reduces performance degradation by an average of $12.8\%$, demonstrating its effectiveness in fault handling.
\end{abstract}

\begin{IEEEkeywords}
mobile edge computing, real-time offloading, fault-tolerance, approximation algorithm
\end{IEEEkeywords}

\addtolength{\textfloatsep}{-3mm}

\section{Introduction}\label{sec:introduction}
Internet of Things (IoT) has emerged as a forefront technology (e.g., Cisco predicts 500 billion IoT devices by 2030 \cite{ji2021guest}), driven by continuous advancements in hardware, software, and communication technologies (e.g., ultra-reliable low-latency communication features of 5G)~\cite{gao2024energy}. 
Meanwhile, the development of Artificial Intelligence (AI) has spurred growing interest in deploying AI services on IoT devices, enabling applications such as path planning and object detection in autonomous vehicles~\cite{yurtsever2020survey}. These applications are typically computation-intensive and subject to strict latency requirements, posing significant challenges for resource-constrained IoT devices. To achieve reasonable execution time and energy efficiency, model compression techniques such as quantization~\cite{polino2018model} are often applied when deploying AI services on IoT devices, albeit at the cost of reduced inference accuracy.

To obtain high-quality inference results while satisfying the latency requirements of End Devices (EDs), Mobile Edge Computing (MEC) has emerged as a promising paradigm. In MEC, Edge Servers (ESs) are deployed at the network edge to provide wireless bandwidth for task offloading and computational resources for task processing. A task generated by an ED can be offloaded to a nearby ES (termed offload-ES), which is then processed on the offload-ES or further forwarded to another ES for processing (termed process-ES) via the wired backhaul network. By leveraging ESs, MEC enables the use of full-scale AI models for high-quality inference while keeping computation close to EDs. This proximity substantially reduces communication latency compared to cloud computing, making MEC well-suited for latency-sensitive applications.

Despite the advantages of MEC, the bandwidth and computational capacity of ESs can become bottlenecks as the number of tasks increases, necessitating efficient strategies for \textit{task offloading} (i.e., determining the offload-ES and process-ES for each task) and \textit{resource allocation} (i.e., deciding the amount of bandwidth and computational resource to allocate for each task). Jointly optimizing task offloading and resource allocation enables adaptive resource management based on both ES resource availability and the wireless channel quality.

Numerous studies have explored Deadline-constrained task Offloading and resource Allocation Problems (DOAP), which can be broadly categorized into two groups. The first group (e.g., \cite{dai2018joint, zhu2019folo, xu2020energy, dai2020edge, xie2022providing, li2023throughput, qi2024joint, song2024cross, liao2024an, he2024joint, hossain2023network, jiang2024aerial, wang2019joint, li2019cooperative, thai2021optimal, fan2023joint, lin2023a, gao2025local}) assumes that each offloaded task is processed on its selected offload-ES. The second group (e.g., \cite{hu2020dynamic, zhong2025joint, li2024a, du2018computation, gao2022deadline, wang2024joint, gao2025energy}) considers a more flexible setting where tasks being offloaded can be further forwarded to another ES for processing via the backhaul network; we term this group as forwarding-enabled DOAP. While the first group simplifies system design, the second offers enhanced load-balancing capability by leveraging inter-ES cooperation. The limited coverage of 5G signals further emphasizes the value of backhaul-assisted task forwarding. When requests crowd around a single ES, resource congestion may occur, whereas other ESs with available capacity remain underutilized due to wireless network coverage constraints. Despite its advantage, the forwarding-enabled DOAP remains underexplored, and most existing studies rely on heuristic approaches without theoretical performance guarantees \cite{hu2020dynamic, zhong2025joint, li2024a, du2018computation, gao2022deadline, wang2024joint}. This gap motivates the design of efficient algorithms that can achieve both computational efficiency and solution quality.

Recently, a $\frac{1-\alpha}{2}$-approximation algorithm was proposed for the forwarding-enabled DOAP \cite{gao2025energy}, ensuring that the achieved utility value is at least $\frac{1-\alpha}{2}$ of the optimal. Here, $\alpha$ represents the maximum fraction of an ES’s capacity that can be allocated to a single task. Although this algorithm provides a theoretical bound, its performance guarantee depends on $\alpha$, leading to large variation under practical scenarios with different $\alpha$ requirements. Furthermore, most existing studies assume a fault-free MEC environment, overlooking the potential impact of backhaul network or ES failures. In real deployments, failures in backhaul links or ESs are inevitable and may cause severe service disruptions in the absence of effective fault-handling mechanisms. To the best of our knowledge, no prior work on DOAP has explicitly investigated fault detection and recovery in MEC systems.

In this paper, we investigate a forwarding-enabled DOAP in MEC, considering both bandwidth and computational resource constraints, i.e., the total resource that each ES can allocate to tasks is bounded by its resource capacity for any resource type. The objective is to maximize the total accuracy improvement across all tasks. We denote this problem as $\mathbf{P}$ and formulate it as an Integer Linear Programming (ILP) problem. The maximum weight three-dimensional matching problem \cite{burkard2009assignment}, which is NP-hard, can be reduced to a special case of $\mathbf{P}$ where at most one task can be offloaded to or processed on each ES. Hence, $\mathbf{P}$ is also NP-hard.

To address $\mathbf{P}$, we propose a linear-time $\frac{1}{10}$-approximation algorithm, $\mathtt{FastSA}$, based on a local-ratio technique~\cite{bar2004local}. To the best of our knowledge, it is the first constant-factor approximation algorithm developed for forwarding-enabled DOAPs. In addition, we design a distributed network synchronization framework similar to the Open Shortest Path First (OSPF) Protocol \cite{ospfv2moy}, where each ES periodically monitors its connections with neighboring ESs and broadcasts any detected changes. This framework enables the system to promptly detect and recover from failures in backhaul links and ESs, thereby ensuring service continuity in MEC systems. The main contributions of this paper are summarized as follows:
\begin{itemize}
    \item We address a forwarding-enabled DOAP, $\mathbf{P}$, for real-time task offloading in MEC with both bandwidth and computational resource constraints. To maximize the total accuracy improvement across all tasks, we propose a linear-time $\frac{1}{10}$-approximation algorithm, $\mathtt{FastSA}$, based on the local-ratio technique, providing the first constant-factor theoretical guarantee for forwarding-enabled DOAPs.
    \item We design a distributed backhaul network synchronization framework to detect and recover from failures in backhaul links and ESs. To validate its effectiveness, we extend the open-source simulator mecRT \cite{mecrt} with this framework, creating the first MEC simulator that supports fault-detection and recovery for real-time task offloading.
    \item We evaluate $\mathtt{FastSA}$ using profiled image classification workloads and a real-world taxi trace dataset~\cite{taxi-data} under fault-free and fault-injected MEC simulations. Results show that the average accuracy improvement of $\mathtt{FastSA}$ surpasses benchmark algorithms by $25.6\%$ in fault-free scenarios while maintaining the lowest scheduling overhead. Under fault conditions, the algorithm performance degradation reduces $12.8\%$ on average when the proposed framework is enabled, showing its effectiveness in maintaining service robustness when a fault occurs.
\end{itemize}

\section{Related Works}\label{sec:related-work}

\begin{table}
    \caption{Category of studies on deadline-constrained task offloading and resource allocation}
    \centering
    \resizebox{\columnwidth}{!}{
    \begin{tabular}{|l|c|c|}
    \hline
    
    & 
    \thead{Decide where to \\ offload } 
    & 
    \thead{Decide where to\\ offload and process}
    \\ \hline
    
    \thead{Allocation for $\mathbf{B}$ or $\mathbf{C}$ Unspecified\\ Contention on $\mathbf{B}$ or $\mathbf{C}$ }
    & 
    \thead{ \cite{dai2018joint, zhu2019folo, xu2020energy, dai2020edge, xie2022providing, li2023throughput, qi2024joint, song2024cross, liao2024an}$^{\lozenge \heartsuit}$, \\ \cite{he2024joint}$^{\blacklozenge \spadesuit}$}  
    &  
    \thead{\cite{hu2020dynamic, zhong2025joint}$^{\lozenge \heartsuit}$}
    \\ \hline

    \thead{Allocation for $\mathbf{B}$ or $\mathbf{C}$ Unspecified\\ Contention on $\mathbf{B}$ and $\mathbf{C}$} 
    &
    \thead{\cite{hossain2023network, jiang2024aerial}$^\lozenge$}  
    & 
    \thead{\cite{li2024a}$^{\blacklozenge}$}
    \\ \hline
    
    \thead{Allocation for $\mathbf{B}$ and $\mathbf{C}$ Unspecified\\ Contention on $\mathbf{B}$ and $\mathbf{C}$} 
    & 
    \thead{\cite{wang2019joint, li2019cooperative, thai2021optimal, fan2023joint, lin2023a, gao2025local}}
    & 
    \cellcolor{blue!25}{\cite{du2018computation, gao2022deadline, wang2024joint, gao2025energy}}
    \\ \hline
    \end{tabular}}
    \vspace{1ex}
    
    {\raggedright \footnotesize $\mathbf{B} \leftarrow$ bandwidth;
    $\blacklozenge \leftarrow$ unspecified allocation for $\mathbf{B}$; $\spadesuit \leftarrow$ contention on $\mathbf{B}$;\\ 
    $\mathbf{C} \leftarrow$ comp. res.; $\lozenge \leftarrow$ unspecified allocation for $\mathbf{C}$; $\heartsuit \leftarrow$ contention on $\mathbf{C}$; \\
    \par}
    \label{tab:related-work}
\end{table}

Due to the promising potential of MEC, problems on task offloading and resource allocation in MEC have been extensively investigated~\cite{jamil2022resource, luo2021resource, ramanathan2020survey}. These problems are typically categorized into two types: deadline-constrained and deadline-free (e.g., response-time minimization). Depending on whether resource allocation is pre-specified, they can be further divided into (i) problems determining task offloading under fixed resource allocation and (ii) problems jointly optimizing task offloading and resource allocation. This section focuses on studies addressing deadline-constrained task offloading and resource allocation problems (DOAP).

The rows of Table~\ref{tab:related-work} classify existing DOAP studies based on resource contention and unspecified resource types. Studies considering single-resource contention mainly focused on computational resources with unspecified computational resource allocation~\cite{dai2018joint, zhu2019folo, xu2020energy, dai2020edge, xie2022providing, li2023throughput, qi2024joint, song2024cross, liao2024an, hu2020dynamic, zhong2025joint}, whereas others examined bandwidth contention with unspecified bandwidth allocation~\cite{he2024joint}. Among studies addressing joint bandwidth and computational resource contention, some assumed one unspecified resource type~\cite{hossain2023network, jiang2024aerial, li2024a}, while others considered both unspecified~\cite{wang2019joint, li2019cooperative, thai2021optimal, fan2023joint, lin2023a, gao2025local, du2018computation, gao2022deadline, wang2024joint, gao2025energy}.

The columns of Table~\ref{tab:related-work} further classify DOAP studies by offloading type. In the first column~\cite{dai2018joint, zhu2019folo, xu2020energy, dai2020edge, xie2022providing, li2023throughput, qi2024joint, song2024cross, liao2024an, he2024joint, hossain2023network, jiang2024aerial, wang2019joint, li2019cooperative, thai2021optimal, fan2023joint, lin2023a, gao2025local}, each task can be offloaded to at most one nearby ES (multiple tasks can be offloaded to the same ES under bandwidth constraints) and is processed on the offload-ES. Thus, the decision only involves selecting the offload-ES. The second column~\cite{hu2020dynamic, zhong2025joint, li2024a, du2018computation, gao2022deadline, wang2024joint, gao2025energy} represents forwarding-enabled MEC, where ESs are interconnected via a wired backhaul network and an offloaded task can be further forwarded to another ES for processing. Hence, both the offload-ES and process-ES must be jointly determined.

This paper addresses the DOAP under the most general setting (i.e., the blue category in Table~\ref{tab:related-work}). A few studies have investigated this setting~\cite{du2018computation, gao2022deadline, wang2024joint, gao2025energy}. Among them, some~\cite{du2018computation, wang2024joint} assumed each task has a fixed offload-ES, whereas others~\cite{gao2022deadline, gao2025energy} allowed multiple offloading candidates per task. To solve these problems, heuristic approaches without theoretical guarantees were commonly adopted~\cite{du2018computation, gao2022deadline, wang2024joint}, while only a few works~\cite{gao2025energy} provided theoretical performance bounds. Specifically, Gao \emph{et al.}~\cite{gao2025energy} developed a graph-matching-based algorithm achieving a theoretical bound of $\frac{1-\alpha}{2}$, where $\alpha$ denotes the maximum fraction of an ES’s capacity that can be allocated to a single task. However, this bound depends on $\alpha$, resulting in variation in the performance guarantee.

In contrast, this paper considers a setting where each task has multiple offloading candidates and proposes a linear-time approximation algorithm with a constant performance guarantee of $1/10$. To the best of our knowledge, this is the first constant-factor approximation algorithm for forwarding-enabled DOAPs. Moreover, none of the existing studies in Table~\ref{tab:related-work} address fault-tolerant task offloading. To fill this gap, we design a detection and recovery framework that handles failures in both ESs and the backhaul network.

\section{System Model and Problem Formulation}\label{sec:system-architecture}
The MEC consists of a set of logical Edge Servers (ESs) and End Devices (EDs), as illustrated in Fig.~\ref{fig:system1}. Each ES is logically associated with a 5G base station (gNB) that manages wireless bandwidth allocation during data offloading and a server that provides computational resources to host services for ED tasks. EDs and ESs communicate through the 5G network, while the ESs are interconnected via a wired backhaul network. A task can be offloaded by its ED to a nearby ES, where it is either processed locally at the associated server or forwarded to another ES through the backhaul network for processing. For ease of reading, a notation summary is provided in Table \ref{table:notation}.

\begin{figure}
    \centering
    \includegraphics[width=0.9\linewidth]{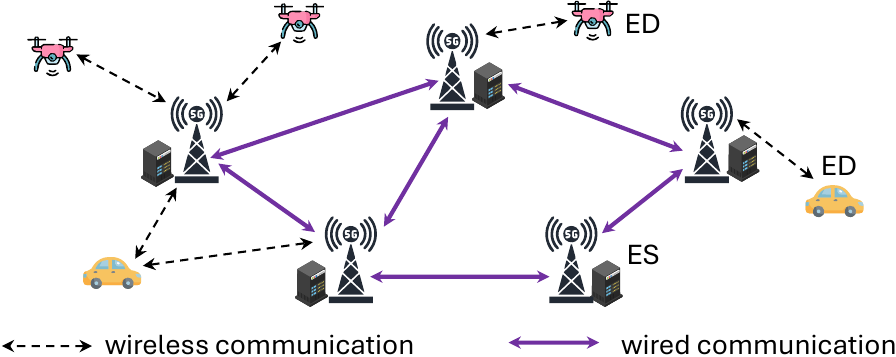}
    \caption{An Example of the mobile edge computing system.}
    \label{fig:system1}
\end{figure}

Let $\mathcal{M}$ be the set of $M$ ESs in the MEC, where $m_k \in \mathcal{M}$ denotes the $k$-th ES. The computational resources of each ES are measured in computing units (CUs), which abstract the execution capacity available on the underlying server hardware. For instance, with NVIDIA’s Green Contexts feature \cite{nvidia2025green}, the execution capacity of a GPU server can be partitioned into multiple resource units, each comprising a predefined allocation of streaming multiprocessors and memory.
Let $C_k$ be the total CUs available at $m_k$, and $C $=$ \max\{C_k | m_k \in \mathcal{M}\}$ be the maximum number of CUs that any ES can provide. 

Orthogonal Frequency Division Multiple Access, a core technology in 5G networks, divides a wireless channel into multiple orthogonal subcarriers, with every $12$ subcarriers forming a resource block, the smallest unit of bandwidth that can be allocated to users. Since PHY/MAC-layer resource block scheduling is performed by the gNB, we define a bandwidth unit (BU) as an abstraction of the PHY/MAC-layer resource block exposed to high-level control. Let $B_k$ be the total BUs available at ES $m_k$, and $B $=$ \max\{B_k | m_k \in \mathcal{M}\}$ be the maximum number of BUs that any ES can offer.

In this work, we focus on high-level offloading, forwarding, and processing decisions. Accordingly, we abstract the wireless transmission capacity available at gNBs as BUs, while direct control of PHY/MAC-layer resource block scheduling remains internal to the gNB. Similar abstraction is applied to computational resources at servers, which are represented at a higher level without exposing low-level execution scheduling.

Let $\mathcal{N}$ denote the set of $N$ tasks pending to offload, where $n_i \in \mathcal{N}$ represents the $i$-th task. Let $\mathcal{M}_i$ denote the set of ESs accessible by task $n_i$'s ED, i.e., the ESs where task $n_i$ can be offloaded. Since these tasks typically operate periodically, task $n_i$ has a period $T_i$, measured in seconds, which also serves as its implicit deadline. We focus on discriminative inference tasks (e.g., classification and detection), for which the output size is negligible compared to the input size.

For many computation-intensive tasks, local processing performance is often constrained by the hardware limitations of EDs. For example, deploying AI applications on resource-constrained EDs often requires model compression techniques to reduce computational and memory demands, usually at the cost of reduced model accuracy. In such cases, EDs can offload tasks to ESs to achieve more accurate inference results. However, the potential unavailability of ES resources due to faults or uncertainty in the wireless network necessitates the ability to process tasks locally to ensure operation safety. Let $d_i^\text{loc}$ denote the local execution time of task $n_i$; besides, we assume $d_i^\text{loc} \le T_i$ to ensure feasible local processing.

We consider two task offloading modes. In the first mode $\psi_1$, a task supports multiple execution stop points, where earlier termination yields lower-quality results (e.g., the anytime YOLO for object detection~\cite{yang2024adadet}). Task $n_i$ first executes locally for $d_i^\text{loc}$ after its release, and the intermediate data are then offloaded to the ES for continued execution to improve accuracy. The ES result is adopted if it is returned before $T_i$; otherwise, the locally obtained result is used. In the second mode $\psi_2$, the task is offloaded immediately after its release while also being executed locally in parallel. The ES result is adopted if it arrives before $T_i$; otherwise, the locally computed result is used. Let $\theta_i^1$ denote the size of the intermediate data offloaded in mode $\psi_1$, and let $\theta_i^2$ denote the size of the input data offloaded in mode $\psi_2$. Let $\Psi_i \subseteq \{\psi_1, \psi_2\}$ be the set of offloading modes that task $n_i$ can choose.

For a given task (e.g., image classification), multiple AI models may be available that perform the same function but differ in inference accuracy and execution time. Each ED is assumed to adopt an offline strategy to select a fixed local model for each task. Let $\mathcal{A}_i$ denote the set of $A_i$ candidate models available on ESs that can improve the inference accuracy of task $n_i$, and let $A $=$ \max\{A_i \mid n_i \in \mathcal{N}\}$. When task $n_i$ is granted to offload, an appropriate model is selected according to its latency requirement and the available resources. For a selected model $a \in \mathcal{A}_i$ deployed on an ES to serve $n_i$, let $u_{ia}$ denote the \textit{accuracy improvement} (gain), defined as the difference in accuracy between the model $a$ and the local model used by $n_i$. Specifically, when a job instance of $n_i$ is offloaded to an ES and processed using model $a$, \textit{and its result is returned before the deadline}, the obtained accuracy improvement for that job instance equals $u_{ia}$.

In 5G networks, each ED periodically transmits Sounding Reference Signals (SRSs) to nearby gNBs. Each gNB utilizes the received SRS to estimate the wireless channel quality between the ED and the gNB~\cite{3gpp38211}. The channel quality can be affected by various factors, including physical distance, transmission power, ED mobility, obstacles, and interference caused by other ED transmissions~\cite{kutila20205g}. Based on the estimated channel quality, each gNB selects an appropriate Modulation and Coding Scheme (MCS) to ensure reliable data transmission. The chosen MCS determines the number of bytes that a resource block can carry within a fixed time interval, thereby defining the achievable data offloading rate between an ED and the gNB, which is exposed to the high-level control plane. Let $\mu_{ik}$ denote the data offloading rate (in bytes per BU per second) when task $n_i$ is offloaded to the gNB associated with ES $m_k$, as determined by the latest SRS feedback.

Mid-band 5G gNB typically covers $500$–$1000$ meters in urban areas. An ED moving at $60$~km/h (the average vehicle speed in urban areas) travels about $1.67$ meters in $100$~ms, which is negligible compared to the gNB coverage range. Thus, consistent with prior studies~\cite{zhan2020mobility, raza2022task}, we assume that the wireless channel quality (and hence $\mu_{ik}$), as estimated from uplink SRS, remains constant for the duration in which a task’s data is offloaded (typically, of the order of $10$ms).

In this MEC, we consider a wired \textit{backhaul network} interconnecting the ESs. To support predictable and parallel data transmission across the backhaul network, we assume that deterministic Ethernet technologies, such as Time-Sensitive Networking (TSN), are available to provide bounded latency on wired links~\cite{szalay2023}. In addition, Software-Defined Networking (SDN)~\cite{alameddine2019dynamic} can be employed to configure routing paths and to provision virtual links with predefined bandwidth allocations for forwarding a task’s data. Wired backhaul links typically offer significantly higher bandwidth than wireless channels. For example, a 5G channel operating in the sub-6GHz band provides a bandwidth capacity of approximately $100$MHz~\cite{5gchannel}, whereas a commercial optical transmission system can offer bandwidths up to $4.5$THz \cite{xia2013commercial}. Given the substantial bandwidth available in wired networks, we ignore communication contention within the backhaul network, and each virtual link can be abstracted as providing a constant data transmission rate, denoted as $\beta$. Let $h_{jk}$ denote the minimum number of link hops between ESs $m_j$ and $m_k$.

\textit{Timing Model.} Suppose task $n_i$ is offloaded to ES $m_j \in \mathcal{M}_i$ with BU allocation $b_{ij} \in \{1, ..., B_j\}$, and is forwarded to ES $m_k$ for processing with CU allocation $c_{ik} \in \{1, ..., C_k\}$ and candidate model $a \in \mathcal{A}_i$.

In offloading mode $\psi_1$, task $n_i$ first executes locally for $d_i^\text{loc}$ after its job instance is released. The intermediate data $\theta_i^1$ is then offloaded to ES $m_j$. Let $d_{ij}^\text{off}$ denote the offloading time, where $d_{ij}^\text{off} = \theta_i^1/(\mu_{ij} \cdot b_{ij})$. The forwarding time from ES $m_j$ to ES $m_k$ is given as $d_{ijk}^\text{fwd} = h_{jk} \cdot \theta_i^1 /\beta $. Let $d_{ik}^\text{pro}$ denote the processing time of task $n_i$ on ES $m_k$, which is a function of the allocated CUs $c_{ik}$, the hardware type of the server associated with ES $m_k$, and the selected candidate model $a$.
Since the result data size is much smaller than the input for discriminative inference tasks, the result forwarding time from $m_k$ to $m_j$ is negligible. Besides, the result can be downloaded from $m_j$ using one BU within one time unit. Since the downlink bandwidth in 5G is generally larger than the uplink capacity and each offloaded task is guaranteed at least one BU in the uplink, we omit bandwidth contention for result downloading and define the result return time as a fixed constant $d^\text{rr}$. The deadline constraint in mode $\psi_1$ is given as
\begin{equation}\label{eq:system1}
    d_i^\text{loc} + d_{ij}^\text{off} + d_{ijk}^\text{fwd} + d_{ik}^\text{pro} + d^\text{rr} \le T_i.
\end{equation}

In offloading mode $\psi_2$, the job instance of task $n_i$ is offloaded immediately after release. Thus, $d_{ij}^\text{off} = \theta_i^2/(\mu_{ij} \cdot b_{ij})$ and $d_{ijk}^\text{fwd} = h_{jk} \cdot \theta_i^2 / \beta$. The processing time $d_{ik}^\text{pro}$ and result return time $d^\text{rr}$ remain the same as in mode $\psi_1$. The deadline constraint in $\psi_2$ is expressed as
\begin{equation}\label{eq:system2}
    d_{ij}^\text{off} + d_{ijk}^\text{fwd} + d_{ik}^\text{pro} + d^\text{rr} \le T_i.
\end{equation}

\begin{table}
    \caption{Notation (Parameters and Variables)}
    \resizebox{\columnwidth}{!}{
    \begin{tabular}[t]{cl}
        \toprule
        \multicolumn{1}{c}{symb.} & \multicolumn{1}{c}{definition} \\
        \midrule
        $\mathcal{M}$ & set of ESs, where $|\mathcal{M}| = M$ \\ 
        $C_k$ & total CUs of ES $m_k$; $C=\max\{C_k|m_k \in \mathcal{M}\}$ \\
        $B_k$ & total BUs of ES $m_k$; $B=\max\{B_k|m_k \in \mathcal{M}\}$ \\
        $\mathcal{N}$ & set of tasks, where $|\mathcal{N}|=N$\\
        $T_i$ & period (deadline) of task $n_i \in \mathcal{N}$\\
        $\mathcal{M}_i$ & set of ESs accessible by vehicle $v_i$\\
        $d_i^\text{loc}$ & local execution time of task $n_i \in \mathcal{N}$\\
        $\Psi_i$ & set of offloading modes of task $n_i$, $\Psi_i \subseteq \{\psi_1, \psi_2\}$\\
        $\theta_i^1$ & intermediate data size of $n_i$ (for offloading mode $\psi_1$)\\
        $\theta_i^2$ & input data size of $n_i$ (for offloading mode $\psi_2$)\\
        $\mathcal{A}_i$ & set of candidate models for task $n_i$ if offloaded\\
        $u_{ia}$ & improved accuracy for $n_i$ if choosing model $a \in \mathcal{A}_i$\\
        $\mu_{ik}$ & offloading rate per BU per second from $n_i$ to $m_k$\\
        $\beta$ & data rate in the virtual link of the backhaul network\\
        $h_{jk}$ & hop count in shortest path between ESs $m_j$ and $m_k$\\
        $\ell$ & $\ell \triangleq \left< n_i, m_j, m_k, b_\ell, c_\ell, \psi, a \right>$, a service instance \\ 
        $u(\ell)$ & improved accuracy by serving a task following $\ell$\\
        $\mathcal{L}$ & set of all feasible service instances\\
        $\mathcal{L}_i^\text{task}$ & $\mathcal{L}_i^\text{task} \subseteq \mathcal{L}$, set of service instances for task $n_i$ \\
        $\mathcal{L}_j^\text{off}$ & $\mathcal{L}_j^\text{off} \subseteq \mathcal{L}$, set of service instances offloaded to ES $m_j$ \\
        $\mathcal{L}_k^\text{pro}$ & $\mathcal{L}_k^\text{pro} \subseteq \mathcal{L}$, set of service instances processed on ES $m_k$ \\
        $z_\ell$ & binary selection variable for service instance $\ell$\\
        \bottomrule
    \end{tabular}}
    \label{table:notation}
\end{table}

This paper aims to determine the offloading strategy for each task $n_i$ (including offloading mode $\psi$, offload-ES $m_j$, BU allocation $b_{ij}$, process-ES $m_k$, CU allocation $c_{ik}$, and candidate model $a$) to maximize the total accuracy improvement per second. The optimization must satisfy the \textit{offloading constraints} (i.e., $\psi \in \Psi_i$, $m_j \in \mathcal{M}_i$, and $b_{ij} \le B_j$), the \textit{processing constraints} (i.e., $c_{ik} \le C_k$, $a \in \mathcal{A}_i$), the \textit{task deadline constraints} (Eq.~\eqref{eq:system1} or Eq.~\eqref{eq:system2}), and the \textit{ES resource constraints} (i.e., the total allocated resources on each ES cannot exceed its available capacity for any resource type). We denote this optimization problem as $\mathbf{P}$. 

The direct formulation of $\mathbf{P}$ is nonlinear due to deadline constraints \eqref{eq:system1} and \eqref{eq:system2}, making it difficult to solve directly. Here, we introduce a more concise ILP formulation for $\mathbf{P}$ based on service instances (defined in Definition \ref{def:instance}) of all tasks.

\begin{definition}[Service Instance]\label{def:instance}
    A service instance of task $n_i$ is defined as $\ell \triangleq \left< n_i, m_j, m_k, b_\ell, c_\ell, \psi, a \right>$, representing that task $n_i$ selects offloading mode $\psi$, and it is offloaded to ES $m_j$ with BU allocation $b_\ell$; besides, it is forwarded to ES $m_k$ for processing with CU allocation $c_\ell$ and candidate model $a$. Moreover, $\ell$ needs to satisfy (i) the offloading constraints (i.e., $\psi \in \Psi_i$, $m_j \in \mathcal{M}_i$, and $b_\ell \le B_j$), (ii) the processing constraints (i.e., $c_\ell \le C_k$, $a \in \mathcal{A}_i$), and (iii) the \textbf{deadline} constraints (i.e., Eq. \eqref{eq:system1} for $\psi=\psi_1$ or Eq. \eqref{eq:system2} for $\psi=\psi_2$).
\end{definition}

Let $\ell \triangleq \left< n_i, m_j, m_k, b_\ell, c_\ell, \psi, a \right>$ denote a service instance of task $n_i$. Given task $n_i$ and the selected model $a$, the resulting accuracy improvement per second is $u_{ia}/T_i$ (if all job instances of $n_i$ can be successfully offloaded and meet deadlines); we denote this utility value as $u(\ell)$, where $u(\ell)=u_{ia}/T_i$. Since both bandwidth and computational resource allocations are discrete, we can enumerate all feasible service instances for each task. Let $\mathcal{L}$ denote the set of all feasible service instances for all tasks in $\mathcal{N}$. We define $\mathcal{L}_i^\text{task} \subseteq \mathcal{L}$ as the subset of service instances corresponding to task $n_i$, $\mathcal{L}_j^\text{off} \subseteq \mathcal{L}$ as the subset of service instances in which the task is offloaded to ES $m_j$, and $\mathcal{L}_k^\text{pro} \subseteq \mathcal{L}$ as the subset of service instances in which the task is processed on ES $m_k$. 

The MEC comprises $N$ tasks and $M$ ESs, where each ES provides up to $B$ BUs and $C$ CUs. Each task can adopt up to two offloading modes and select from $A$ candidate models. For each bandwidth allocation, only the minimum required computational allocation needs to be determined to satisfy the task deadline (and vice versa). Consequently, the total number of service instances in $\mathcal{L}$ is at most $(2 N M^2 A \cdot \min\{B, C\})$. For each service instance $\ell \in \mathcal{L}$, let $z_\ell \in \{0,1\}$ be its selection variable, where $z_\ell = 1$ indicates that $\ell$ is chosen as the service instance assigned to the corresponding task in the final solution. Then, we formulate $\mathbf{P}$ as follows.
\begin{subequations}
    \allowdisplaybreaks
    \begin{align}
        (\mathbf{P}) \ \ \max \textstyle \sum_{\ell \in \mathcal{L}} & u(\ell) \cdot z_\ell \hspace{1cm} \label{eq:ilp} \\
        \text{subject to:} \hspace{1cm} \textstyle \sum_{\ell \in \mathcal{L}_j^\text{off}} b_\ell \cdot z_\ell \le B_j,& \ \forall m_j \in \mathcal{M} \label{eq:ilpa} \\
        \textstyle \sum_{\ell \in \mathcal{L}_k^\text{pro}} c_\ell \cdot z_\ell \le C_k,& \ \forall m_k \in \mathcal{M} \label{eq:ilpb} \\
        \textstyle \sum_{\ell \in \mathcal{L}_i^\text{task}} z_\ell \le 1,& \ \forall n_i \in \mathcal{N} \label{eq:ilpc} \\
        z_\ell \in \{0,1\},& \ \forall \ell \in \mathcal{L} \label{eq:ilpd}
    \end{align}
\end{subequations}
Eqs.~\eqref{eq:ilpa} and \eqref{eq:ilpb} are the ES resource constraints, while Eq.~\eqref{eq:ilpc} ensures that at most one service instance is selected for each task. The offloading constraints, processing constraints, and \textit{task deadline constraints} are implicitly enforced through the definition of service instances (Definition~\ref{def:instance}); thus, only ES resource constraints need to be explicitly enforced in the ILP formulation. As a result, an offloaded task contributes to the total accuracy improvement only if its deadline is met.

\textbf{Problem Complexity.} Given a solution to $\mathbf{P}$, we can efficiently verify whether it satisfies all the constraints in polynomial time, which indicates that the decision version of $\mathbf{P}$ belongs to the NP class. Furthermore, a well-known NP-Hard problem, the maximum weight three-dimensional matching problem \cite{burkard2009assignment}, can be reduced to a special case of $\mathbf{P}$, where each task has only one offloading mode and one candidate model, and each ES allocates its entire bandwidth to the tasks offloaded to it and all its computational resources to the tasks processed on it. Since this simplified instance of $\mathbf{P}$ is NP-Hard, the original problem $\mathbf{P}$ is also NP-Hard.

\section{Fast Separate Assignment Algorithm}\label{sec:algorithm}
In this section, we present the Fast Separate Assignment Algorithm ($\mathtt{FastSA}$) for problem $\mathbf{P}$ (Eqs. \eqref{eq:ilp}$\sim$\eqref{eq:ilpd}). Due to task forwarding, a service instance may compete with different sets of service instances for bandwidth and computational resources, which complicates the enforcement of ES resource constraints. $\mathtt{FastSA}$ addresses this challenge by exploiting the structure of service instances and separating them into two disjoint groups, which are scheduled independently to simplify resource contention. The algorithm then returns the schedule with the higher total accuracy improvement.

We then show that $\mathtt{FastSA}$ achieves a $\frac{1}{10}$-approximation guarantee for $\mathbf{P}$ in Subsection \ref{subsec:fastSA}, and present a linear-time implementation with respect to the size of $\mathcal{L}$ (i.e., $\mathcal{O}(|\mathcal{L}|)$) in Subsection \ref{subsec:fastSA-linear}. To the best of our knowledge, $\mathtt{FastSA}$ is the first constant-factor approximation algorithm for $\mathbf{P}$. 

\subsection{Fast Separate Assignment Algorithm (FastSA)}\label{subsec:fastSA}
Based on the allocated resource, we divide all service instances in $\mathcal{L}$ into four categories, LL, LH, HL, and HH, and denote the set of service instances in $\mathcal{L}$ that belong to each category as $\mathcal{L}_\text{LL}$, $\mathcal{L}_\text{LH}$, $\mathcal{L}_\text{HL}$, and $\mathcal{L}_\text{HH}$, respectively.
For a service instance $\ell \triangleq \left< n_i, m_j, m_k, b_\ell, c_\ell, \psi, a \right>$,
\begin{itemize}
    \item $\ell$ is in category LL if $b_{\ell} \le 1/2 \cdot B_j$ and $c_{\ell} \le  1/2 \cdot C_k$;
    \item $\ell$ is in category LH if $b_{\ell} \le 1/2 \cdot B_j$ and $c_{\ell} > 1/2 \cdot C_k$;
    \item $\ell$ is in category HL if $b_{\ell} > 1/2 \cdot B_j$ and $c_{\ell} \le 1/2 \cdot C_k$;
    \item $\ell$ is in category HH if $b_{\ell} > 1/2 \cdot B_j$ and $c_{\ell} > 1/2 \cdot C_k$.
\end{itemize}

$\mathtt{FastSA}$ (Algorithm~\ref{alg:fastSA}) divides $\mathcal{L}$ into two sets: $\mathcal{L}_\text{LL} \cup \mathcal{L}_\text{HL} \cup \mathcal{L}_\text{HH}$ and $\mathcal{L}_\text{LH}$. Let $\mathbb{L}_1$ be the vector of $\mathcal{L}_\text{LL} \cup \mathcal{L}_\text{HL} \cup \mathcal{L}_\text{HH}$ that is sorted in the order of LL, HL, and HH categories. Let $\mathbb{L}_2$ be the vector of $\mathcal{L}_\text{LH}$. $\mathtt{FastSA}$ first calls function $\mathtt{Assign}(\mathbf{u}, \mathbb{L}_1)$ to obtain a scheduling solution $\mathcal{S}_1$ for the combined set $\mathcal{L}_\text{LL} \cup \mathcal{L}_\text{HL} \cup \mathcal{L}_\text{HH}$ (line $2$), where $\mathbf{u}$ denotes the vector of $u(\ell)$ for all $\ell \in \mathcal{L}$. It then invokes $\mathtt{Assign}(\mathbf{u}, \mathbb{L}_2)$ to derive a scheduling solution $\mathcal{S}_2$ for the remaining set $\mathcal{L}_\text{LH}$ (line $3$). Finally, $\mathtt{FastSA}$ outputs the solution with the higher total accuracy improvement.
$\mathtt{Assign}$ is a recursive function (invoked recursively in line~$10$). At each recursion level, $\mathtt{FastSA}$ updates the accuracy values of the remaining service instances (line~$9$). For clarity, we denote the \textit{updated accuracy value} of service instance $\ell$ in the $r$-th level as $w^r(\ell)$, while $u(\ell)$ represents the \textit{original accuracy value}. Let $\mathbf{w}^r$ denote the vector of $w^r(\ell)$. For any service instance set $\mathcal{S}$, we define $w^r(\mathcal{S}) = \sum_{\ell \in \mathcal{S}} w^r(\ell)$ and $u(\mathcal{S}) = \sum_{\ell \in \mathcal{S}} u(\ell)$.

In line~$9$, the accuracy value $w^r(\ell)$ for each remaining $\ell$ is decomposed into $w_1^r(\ell)$ and $w_2^r(\ell)$ based on Eq.~\eqref{eq:fastSA1} and the chosen $\ell^r$ in line $8$. The first four conditions in Eq.~\eqref{eq:fastSA1} correspond to service instances that (i) belong to the same task as $\ell^r$, (ii) are offloaded to and processed on the same ESs as $\ell^r$, (iii) are only offloaded to the same ES as $\ell^r$, and (iv) are only processed on the same ES as $\ell^r$. The intuition is \textit{selecting $\ell^r$ reduces the opportunity for other service instances to be selected}. Specifically, $w_1^r(\ell)$ estimates the impact on $\ell$ if $\ell^r$ is selected, whereas $w_2^r(\ell)$ measures the marginal gain with respect to considered service instances in previous recursive levels. $\mathbf{w}_2^r$ is then used as the input accuracy vector $\mathbf{w}^{r+1}$ for the next recursive level (line~$10$). Based on the solution $\mathcal{S}^{r+1}$ returned from the $(r+1)$-th level, $\ell^r$ is included in the solution $\mathcal{S}^{r}$ if it satisfies the feasibility constraints \eqref{eq:ilpa}--\eqref{eq:ilpc} (line~$11$).

In the following, we first show that $\mathtt{FastSA}$ achieves a $\frac{1}{5}$-approximation for $\mathbf{P}$ when only the set $\mathcal{L}_\text{LL} \cup \mathcal{L}_\text{HL} \cup \mathcal{L}_\text{HH}$ is considered. We then prove that $\mathtt{FastSA}$ also achieves a $\frac{1}{5}$-approximation for $\mathbf{P}$ when considering only $\mathcal{L}_\text{LH}$. Finally, we show that $\mathtt{FastSA}$ is a $\frac{1}{10}$-approximation algorithm for $\mathbf{P}$.

\begin{algorithm}[t!]
    \caption{Fast Separate Assignment ($\mathtt{FastSA}$)}\label{alg:fastSA}

    \SetAlgoHangIndent{0pt} 
    \SetKwFunction{Assign}{$\mathtt{Assign}$}
    \SetKwProg{Fn}{}{:}{}
    \SetKwInOut{Input}{input}
    \SetKwInOut{Output}{output}
    \SetKw{Return}{return}
    \SetKw{Or}{or}
    \SetNoFillComment
    \SetAlgoNoEnd

    Let $\mathbb{L}_1$ be the vector of $\mathcal{L}_\text{LL} \cup \mathcal{L}_\text{HL} \cup \mathcal{L}_\text{HH}$ and is sorted in the order of LL, HL, and HH categories.
    Let $\mathbb{L}_2$ be the vector of $\mathcal{L}_\text{LH}$\;
    $\mathcal{S}_1 \leftarrow$ \Assign{$\mathbf{u}, \mathbb{L}_1$}\;
    $\mathcal{S}_2 \leftarrow$ \Assign{$\mathbf{u}, \mathbb{L}_2$}\;
    \leIf{$u(\mathcal{S}_1) \ge u(\mathcal{S}_2)$}{\Return $\mathcal{S}_1$}{\Return $\mathcal{S}_2$}

    \BlankLine
    \Fn{\Assign{$\mathbf{w}^r, \mathbb{F}$}}
    {
        Remove all instances $\ell$ from $\mathbb{F}$ with $w^r(\ell) \le 0$\;
        \lIf{$\mathbb{F} = \emptyset$}{\Return $\mathcal{S}^r=\emptyset$}
        Suppose $\ell^r \triangleq \left< n_i, m_j, m_k, b_{\ell^r}, c_{\ell^r}, \psi, a \right>$ is the leftmost service instance in $\mathbb{F}$\;
        \scalebox{1}{Decompose $\mathbf{w}^r$ into $\mathbf{w}^r_1$ and $\mathbf{w}^r_2$ (i.e., $\mathbf{w}^r$=$\mathbf{w}^r_1$+$\mathbf{w}^r_2$)} such that $\forall \ell \in \mathbb{F}, w_1^r(\ell) =$
        \begin{equation}\label{eq:fastSA1}
          w^r(\ell^r) \cdot
        \begin{cases}
            1 & \text{if } \ell \in \mathcal{L}_i^\text{task} \\
            2 (\frac{b_\ell}{B_j} + \frac{c_\ell}{C_k}) & \text{if }  \ell \in \mathcal{L}_j^\text{off} \cap \mathcal{L}_k^\text{pro} \setminus  \mathcal{L}_i^\text{task} \\
            2 \frac{b_\ell}{B_j} & \text{if }  \ell \in \mathcal{L}_j^\text{off} \setminus  (\mathcal{L}_k^\text{pro} \cup \mathcal{L}_i^\text{task}) \\
            2 \frac{c_\ell}{C_k} & \text{if }  \ell \in \mathcal{L}_k^\text{pro} \setminus (\mathcal{L}_j^\text{off} \cup \mathcal{L}_i^\text{task}) \\
            0 & \text{otherwise}
        \end{cases}
    \end{equation}
    
    $\mathcal{S}^{r+1} \leftarrow $ \Assign{$\mathbf{w}_2^r$, $\mathbb{F}$}\;
    \leIf{\scalebox{1}{$\mathcal{S}^{r+1} \cup \{\ell^r\}$ is feasible}}{\Return \scalebox{1}{$\mathcal{S}^r \leftarrow \mathcal{S}^{r+1} \cup \{\ell^r\}$}}{\Return \scalebox{1}{$\mathcal{S}^r \leftarrow \mathcal{S}^{r+1}$}}
    }
\end{algorithm}

\begin{theorem}\label{theorem:fastSA1}
    Let $\mathcal{Q}_1$ be an optimal solution of $\mathbf{P}$ when only considering the set $\mathcal{L}_\text{LL} \cup \mathcal{L}_\text{HL} \cup \mathcal{L}_\text{HH}$. Let $\mathcal{S}_1$ be the solution returned in line $2$ of $\mathtt{FastSA}$. Then, we have $u(\mathcal{S}_1) \ge \frac{1}{5}u(\mathcal{Q}_1)$.
\end{theorem}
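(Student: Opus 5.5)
We follow the standard local-ratio argument and induct on the recursion level of $\mathtt{Assign}$, from the deepest level back up to the top. The claim we prove at every level $r$ is: \emph{$\mathcal{S}^r$ is feasible and is a $\frac{1}{5}$-approximation with respect to $\mathbf{w}^r$ among feasible subsets of the current list $\mathbb{F}$ at level $r$.} Line~$2$ calls $\mathtt{Assign}$ with $\mathbf{w}^1=\mathbf{u}$ and $\mathbb{F}=\mathbb{L}_1$, so applying the claim at $r=1$ gives $u(\mathcal{S}_1)\ge \frac15 u(\mathcal{Q}_1)$.

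\textbf{Base case.} If $\mathbb{F}=\emptyset$ after the removal in line~$6$, every remaining instance has nonpositive weight. The optimum with respect to $\mathbf{w}^r$ is then $0$, and $\mathcal{S}^r=\emptyset$ attains it.

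\textbf{Inductive step.} We use the local-ratio lemma. Since $\mathbf{w}^r=\mathbf{w}^r_1+\mathbf{w}^r_2$, it suffices to show that $\mathcal{S}^r$ is a $\frac15$-approximation with respect to each of $\mathbf{w}_1^r$ and $\mathbf{w}_2^r$ separately.

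For $\mathbf{w}_2^r$: we have $w_2^r(\ell^r)=0$, because the first case of Eq.~\eqref{eq:fastSA1} gives $w_1^r(\ell^r)=w^r(\ell^r)$. Instances removed for having $w_2^r\le 0$ contribute nothing positive. Hence, by the induction hypothesis, $\mathcal{S}^{r+1}$ and therefore $\mathcal{S}^r\supseteq\mathcal{S}^{r+1}$ are $\frac15$-approximate for $\mathbf{w}_2^r$. This uses $w_2^r(\ell^r)=0$ and the fact that the optimum with $\ell^r$ dropped is no smaller in value.

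For $\mathbf{w}_1^r$ we need two bounds.
\begin{itemize}
\item \emph{Lower bound.} $w_1^r(\mathcal{S}^r)\ge w^r(\ell^r)$. Either $\ell^r\in\mathcal{S}^r$, or adding $\ell^r$ to $\mathcal{S}^{r+1}$ was infeasible. Infeasibility means a same-task instance is present (contributing $w^r(\ell^r)$), or $\sum b_\ell>B_j-b_{\ell^r}$ at $m_j$, or $\sum c_\ell> C_k-c_{\ell^r}$ at $m_k$.
\item \emph{Upper bound.} Every feasible $\mathcal{Q}\subseteq\mathbb{F}$ satisfies $w_1^r(\mathcal{Q})\le 5\,w^r(\ell^r)$. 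A feasible $\mathcal{Q}$ contains at most one instance of $n_i$ (cost at most $1$), has $\sum b_\ell/B_j\le1$ at $m_j$ (cost at most $2$), and has $\sum c_\ell/C_k\le 1$ at $m_k$ (cost at most $2$). The total is at most $5w^r(\ell^r)$.
\end{itemize}

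The lower bound is where the ordering of $\mathbb{L}_1$ (LL, then HL, then HH) is used; we expect this step to be the main obstacle. When $\ell^r$ is LL, $b_{\ell^r}\le B_j/2$, so a bandwidth violation forces the instances of $\mathcal{S}^{r+1}$ at $m_j$ to have $\sum b_\ell> B_j/2$, i.e.\ $2\sum b_\ell/B_j>1$. The computational case is symmetric.

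When $\ell^r$ is HL or HH, all LL instances precede it and have already been processed. A bandwidth violation then needs a different argument. Every remaining instance $\ell$ in $\mathcal{S}^{r+1}$ lies after $\ell^r$, so it is HL or HH and has $b_\ell>B_j/2$. A single such instance already gives $2b_\ell/B_j>1$. For the computational side with $\ell^r$ in HH and $c_{\ell^r}>C_k/2$, a violation means some chosen $\ell$ at $m_k$ has $c_\ell>0$. Since $\ell$ is after $\ell^r$, it is HH and has $c_\ell>C_k/2$, giving again $2c_\ell/C_k>1$. For $\ell^r$ in HL with $c_{\ell^r}\le C_k/2$, the LL argument applies.

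We must also check that the excluded category LH never appears after an HL/HH instance with small $b$ but large $c$. That configuration would break the computational case, and this is precisely why LH is separated into $\mathbb{L}_2$. Finally, there is the subtlety that the sets $\mathbb{F}$ at deeper levels only shrink, so the instances in $\mathcal{S}^{r+1}$ are indeed later in the order than $\ell^r$. With both bounds in hand, $\mathcal{S}^r$ is $\frac15$-approximate for $\mathbf{w}_1^r$, which closes the induction.
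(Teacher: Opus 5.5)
Your proposal is correct and follows the paper's proof essentially step for step. It uses the same backward induction over recursion levels, the same split $\mathbf{w}^r=\mathbf{w}^r_1+\mathbf{w}^r_2$ with $w_2^r(\ell^r)=0$, and the same $1+2+2$ budget bound $w_1^r(\mathcal{Q})\le 5w^r(\ell^r)$. It also uses the same LL/HL/HH case analysis, resting on the fact that everything in $\mathcal{S}^{r+1}$ lies later in $\mathbb{L}_1$. Your hypothesis, stated relative to feasible subsets of the current $\mathbb{F}$ rather than a fixed $\mathcal{Q}_1$, is slightly cleaner because it never needs $w^r$ on instances already dropped from $\mathbb{F}$.

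One small slip, which is moot since LH is not in $\mathbb{L}_1$: an LH instance placed after an HL/HH instance would break the \emph{bandwidth} case, because its $b_\ell$ can be small. The computational case would not be affected.
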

\begin{proof}
    Let the deepest recursive level of $\mathtt{Assign}$ be level $\delta$ (i.e., when $\mathbb{F}=\emptyset$ in line $7$), and the initial call be level $1$. We use simple induction to prove that $w^r(\mathcal{S}^r) \ge \frac{1}{5}w^r(\mathcal{Q}_1)$ for $r=\delta, \delta-1, ..., 1$. Since $\mathbf{w}^1=\mathbf{u}$ and $\mathcal{S}^1$ is the solution returned in line $2$ of $\mathtt{FastSA}$, this theorem can thus be proved.

    \textbf{Base Case} ($r=\delta$): When $r=\delta$, $\mathbb{F}=\emptyset$ after removing all service instances with non-positive accuracy value (line $6$), implying that $w^r(\ell) \le 0, \forall \ell \in \mathbb{F}$, before line $6$ is executed. The returned $\mathcal{S}^r=\emptyset$, so $w^r(\mathcal{S}^r) \ge w^r(\mathcal{Q}_1) \ge \frac{1}{5}w^r(\mathcal{Q}_1)$.

    \textbf{Inductive Step} ($r < \delta$): When $r < \delta$, we show that if $w^{r+1}(\mathcal{S}^{r+1}) \ge \frac{1}{5}w^{r+1}(\mathcal{Q}_1)$ holds, then $w^{r}(\mathcal{S}^r) \ge \frac{1}{5}w^r(\mathcal{Q}_1)$. Here, $\mathcal{S}^{r+1}$ is the solution returned in line $10$ from the lower level. Suppose $w^{r+1}(\mathcal{S}^{r+1}) \ge \frac{1}{5}w^{r+1}(\mathcal{Q}_1)$. Since we assign $\mathbf{w}_2^r$ as the accuracy vector for the lower level in line $10$, $\mathbf{w}_2^r = \mathbf{w}^{r+1}$. Then, we have $w_2^r(\mathcal{S}^{r+1}) \ge \frac{1}{5}w_2^r(\mathcal{Q}_1)$. Let $\ell^r \triangleq \left< n_i, m_j, m_k, b_{\ell^r}, c_{\ell^r}, \psi, a \right>$ (line $8$). Based on the first condition in Eq. \eqref{eq:fastSA1}, $w_1^r(\ell^r)=w^r(\ell^r)$, so $w_2^r(\ell^r)=0$; thus,
    \begin{equation}\label{eq:fastSA2}
        \textstyle w_2^r(\mathcal{S}^r) = w_2^r(\mathcal{S}^{r+1}) \ge \frac{1}{5}w_2^r(\mathcal{Q}_1).
    \end{equation}
    If $\ell^r$ can be added to $\mathcal{S}^r$ in line $11$, \scalebox{0.9}{$w_1^r(S^r) \ge w_1^r(\ell^r)=w^r(\ell^r)$.}\\
    If $\ell^r$ cannot be added to $\mathcal{S}^r$ in line $11$ due to constraint \eqref{eq:ilpc} (i.e., at most one instance of the same task can be selected), one $\ell \in \mathcal{L}_i^\text{task}$ must already exist in $\mathcal{S}^{r+1}$. Based on the first condition in Eq. \eqref{eq:fastSA1}, $w_1^r(S^r) = w_1^r(S^{r+1}) \ge w^r(\ell^r)$.\\
    If $\ell^r$ cannot be added to $\mathcal{S}^r$ in line $11$ due to the ES resource constraints \eqref{eq:ilpa} or \eqref{eq:ilpb}, we proceed with the analysis based on the category of $\ell^r$. Given the sorted vector $\mathbb{L}_1$ and the ``leftmost'' selection strategy in line~$8$, the service instances in line~$8$ are chosen in the order of LL, HL, and HH categories. Based on lines~$10$--$11$, \textbf{adding $\ell^r$ into the solution $\mathcal{S}^r$ occurs in the reverse category order}, i.e., HH, HL, and LL. Thus, we consider the following three cases:
    \begin{itemize}
        \item Case 1: $\ell^r \in \mathcal{L}_\text{HH}$. In this case, $\mathcal{S}^{r+1}$ is a subset of $\mathcal{L}_\text{HH}$. Since $\ell^r$ cannot be added into $\mathcal{S}^{r+1}$ due to ES resource constraint, there must exist one $\ell \in \mathcal{S}^{r+1}$ in which the task is offloaded to ES $m_j$ or processed on ES $m_k$ (maybe both). By definition, $\frac{b_\ell}{B_j} > \frac{1}{2}$ and $\frac{c_\ell}{C_k} > \frac{1}{2}$ for all $\ell \in \mathcal{L}_\text{HH}$. Based on the second, third, or fourth condition in Eq. \eqref{eq:fastSA1}, we have $w_1^r(S^r) = w_1^r(S^{r+1}) \ge w^r(\ell^r)$.
        \item Case 2: $\ell^r \in \mathcal{L}_\text{HL}$. In this case, $\mathcal{S}^{r+1}$ is a subset of $\mathcal{L}_\text{HL} \cup \mathcal{L}_\text{HH}$. By definition, $\frac{b_\ell}{B_j} > \frac{1}{2}$ and $\frac{c_\ell}{C_k} \le \frac{1}{2}$ for all $\ell \in \mathcal{L}_\text{HL}$. If $\ell^r$ cannot be added into $\mathcal{S}^r$ due to the ES bandwidth constraint, there must exist one service instance $\ell \in \mathcal{S}^{r+1}$ in which the task is offloaded to ES $m_j$. As $\frac{b_\ell}{B_j} > \frac{1}{2}$ for all $\ell \in \mathcal{L}_\text{HL} \cup \mathcal{L}_\text{HH}$, based on the second or third condition in Eq. \eqref{eq:fastSA1}, $w_1^r(S^r) = w_1^r(S^{r+1})\ge w^r(\ell^r)$. If $\ell^r$ cannot be added into $\mathcal{S}^{r+1}$ due to the ES computational resource constraint, at least $\frac{1}{2}C_k$ amount of ES $m_k$'s computational resource have been allocated to service instances in $\mathcal{S}^{r+1}$. Based on the second or fourth condition in Eq. \eqref{eq:fastSA1}, $w_1^r(S^r) = w_1^r(S^{r+1}) \ge w^r(\ell^r)$.
        \item Case 3: $\ell^r \in \mathcal{L}_\text{LL}$. Since $\frac{b_\ell}{B_j} \le \frac{1}{2}$ and $\frac{c_\ell}{C_k} \le \frac{1}{2}$ for all $\ell \in \mathcal{L}_\text{LL}$, if $\ell^r$ cannot be added into $\mathcal{S}^r$ due to ES resource constraints, at least $\frac{1}{2}B_j$ amount of ES $m_j$'s bandwidth or $\frac{1}{2}C_k$ amount of ES $m_k$'s computational resource have been allocated to service instances in $\mathcal{S}^{r+1}$ (maybe both); based on the second to forth conditions in Eq. \eqref{eq:fastSA1}, we have $w_1^r(S^r) = w_1^r(S^{r+1}) \ge w^r(\ell^r)$.
    \end{itemize}
    
    In summary, regardless of whether $\ell^r$ can be added to $\mathcal{S}^{r+1}$ in line $11$, $w_1^r(S^r) \ge w^r(\ell^r)$. Meanwhile, based on Eq. \eqref{eq:fastSA1}, the value $w_1^r(\mathcal{Q}_1)$ can be \textit{maximized} if the optimal solution $\mathcal{Q}_1$ contains one service instance in $\mathcal{L}_i^\text{task} \setminus (\mathcal{L}_j^\text{off} \cup \mathcal{L}_k^\text{pro})$, and all bandwidth of ES $m_j$ (i.e., $B_j$) and all computational resource of ES $m_k$ (i.e., $C_k$) are allocated to service instances in $(\mathcal{L}_j^\text{off} \cup \mathcal{L}_k^\text{pro}) \setminus \mathcal{L}_i^\text{task}$ (only these service instances have non-zero $w_1^r(\ell)$); in this case, $w_1^r(\mathcal{Q}_1) \le w^r(\ell^r) + 4w^r(\ell^r) = 5w^r(\ell^r)$. Hence,
    \begin{equation}\label{eq:fastSA3}
        \textstyle w_1^r(S^r) \ge \frac{1}{5} \cdot w_1^r(\mathcal{Q}_1).
    \end{equation}
    Given Eq. \eqref{eq:fastSA2}, Eq. \eqref{eq:fastSA3}, and $\mathbf{w}^r = \mathbf{w}_1^r + \mathbf{w}_2^r$, we have $w^r(S^r) \ge \frac{1}{5}w^r(\mathcal{Q}_1)$.
    Combining the results in the base case and the inductive step, we have $w^r(S^r) \ge \frac{1}{5}w^r(\mathcal{Q}_1)$ for $r=\delta,...,1$ based on simple induction.
\end{proof}

\begin{theorem}\label{theorem:fastSA2}
    Let $\mathcal{Q}_2$ be an optimal solution of $\mathbf{P}$ when only considering the set $\mathcal{L}_\text{LH}$, and $\mathcal{S}_2$ be the solution returned in line $3$ of $\mathtt{FastSA}$. Then, $u(\mathcal{S}_2) \ge \frac{1}{5}u(\mathcal{Q}_2)$.
\end{theorem}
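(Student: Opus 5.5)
We follow the same local-ratio induction as in the proof of Theorem~\ref{theorem:fastSA1}, applied now to the call $\mathtt{Assign}(\mathbf{u}, \mathbb{L}_2)$, where every instance in $\mathbb{L}_2$ lies in $\mathcal{L}_\text{LH}$. Let $\delta$ be the deepest recursion level. We show by backward induction on $r=\delta,\delta-1,\dots,1$ that
\[
w^r(\mathcal{S}^r) \ge \tfrac{1}{5}\, w^r(\mathcal{Q}_2).
\]
Since $\mathbf{w}^1=\mathbf{u}$ and $\mathcal{S}^1=\mathcal{S}_2$, this yields the statement.

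\emph{Base case.} At $r=\delta$, every remaining instance has $w^\delta(\ell)\le 0$. Hence $w^\delta(\mathcal{Q}_2)\le 0 = w^\delta(\mathcal{S}^\delta)$, and the bound holds trivially.

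\emph{Inductive step.} The decomposition $\mathbf{w}^r=\mathbf{w}^r_1+\mathbf{w}^r_2$ is handled exactly as before. Because $w_2^r(\ell^r)=0$ and $\mathbf{w}^{r+1}=\mathbf{w}^r_2$, the hypothesis gives $w_2^r(\mathcal{S}^r)\ge\frac15 w_2^r(\mathcal{Q}_2)$.

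The upper bound $w_1^r(\mathcal{Q}_2)\le 5w^r(\ell^r)$ is unchanged, since it uses only feasibility of $\mathcal{Q}_2$. The task constraint contributes at most one instance with coefficient $1$. The bandwidth constraint at $m_j$ caps $\sum 2b_\ell/B_j$ at $2$, and the computational constraint at $m_k$ caps $\sum 2c_\ell/C_k$ at $2$. So the only work is the lower bound $w_1^r(\mathcal{S}^r)\ge w^r(\ell^r)$.

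\emph{Lower bound on $w_1^r(\mathcal{S}^r)$.} If $\ell^r$ is added, the bound is immediate. If $\ell^r$ is blocked by constraint~\eqref{eq:ilpc}, the first case of Eq.~\eqref{eq:fastSA1} gives it directly. The remaining possibility is that $\ell^r$ is blocked by a resource constraint. Here the structure of $\mathcal{L}_\text{LH}$ replaces the category ordering used in Theorem~\ref{theorem:fastSA1}; no ordering is needed, because every instance in $\mathbb{L}_2$ satisfies $c_\ell>\frac12 C_k$ and $b_\ell\le\frac12 B_j$ on its own ESs.

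\emph{Computational violation at $m_k$.} Since $c_{\ell^r}>\frac12 C_k$, the instances of $\mathcal{S}^{r+1}$ processed on $m_k$ use more than $C_k-c_{\ell^r}$ CUs. That alone need not reach $\frac12 C_k$. However, any single such instance already has $c_\ell>\frac12 C_k$, by the LH property relative to its own process-ES, which is $m_k$. So at least one instance of $\mathcal{S}^{r+1}$ lies in $\mathcal{L}_k^\text{pro}$. It receives $w_1^r$-coefficient at least $2c_\ell/C_k>1$ via the second or fourth case, giving $w_1^r(\mathcal{S}^{r+1})\ge w^r(\ell^r)$.

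\emph{Bandwidth violation at $m_j$.} Since $b_{\ell^r}\le \frac12 B_j$, the instances of $\mathcal{S}^{r+1}$ offloaded to $m_j$ use more than $\frac12 B_j$ BUs. The second and third cases then give total coefficient $\sum 2b_\ell/B_j>1$. Some of these instances may also be processed on $m_k$, so they fall in the second case with an even larger coefficient. Throughout, we use that $w_1^r$ is nonnegative on $\mathcal{S}^{r+1}$ because all remaining weights are positive.

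\emph{Combining.} We obtain $w_1^r(\mathcal{S}^r)\ge w^r(\ell^r)\ge\frac15 w_1^r(\mathcal{Q}_2)$. Adding this to the $\mathbf{w}_2^r$ inequality closes the induction.

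\emph{Main obstacle.} The delicate step is the computational-violation case. There one must argue that a blocking instance on $m_k$ exists, rather than relying on an accumulated load of $\frac12 C_k$. This is precisely why LH instances are separated from HH: mixing them with LL instances would break the argument, since LL instances do not guarantee $c_\ell>\frac12 C_k$. I will state it carefully using that $\mathbb{L}_2$ contains only LH instances, so any instance in $\mathcal{S}^{r+1}\cap\mathcal{L}_k^\text{pro}$ individually consumes more than half of $C_k$.
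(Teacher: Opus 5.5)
Your proposal is correct and is essentially the paper's argument: the same local-ratio induction as in Theorem~\ref{theorem:fastSA1}, with the resource-blocking step collapsing to a single case that mirrors Case~2 with the roles of bandwidth and computation swapped (either a single blocker with $c_\ell>\frac12 C_k$ on $m_k$, or accumulated load on $m_j$ exceeding $\frac12 B_j$). One small correction to your closing remark: LL and HH instances are harmless here (an order LL, LH, HH would also work), and what actually forces LH into its own group is HL, since whichever of LH and HL comes later in the order can block the earlier one on the resource in which the earlier one is heavy while itself consuming only a little of that resource.
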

\begin{proof}
    The proof of this theorem follows a similar structure to that of Theorem~\ref{theorem:fastSA1}. The key difference is that when $\ell^r$ cannot be added to $\mathcal{S}^r$ in line~$11$ due to constraints \eqref{eq:ilpa} or \eqref{eq:ilpb} (i.e., the ES resource limitations), only one case (i.e., analogous to the Case~2 analysis in Theorem~\ref{theorem:fastSA1}) needs to be considered. The complete proof is omitted here.
\end{proof}

Combining the results from both Theorem \ref{theorem:fastSA1} and Theorem \ref{theorem:fastSA2}, we have the following conclusion.

\begin{theorem}\label{theorem:fastSA3}
    $\mathtt{FastSA}$ is a $\frac{1}{10}$-approximation guarantee for $\mathbf{P}$.
\end{theorem}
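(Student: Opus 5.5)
The plan is to combine Theorem~\ref{theorem:fastSA1} and Theorem~\ref{theorem:fastSA2} through a standard splitting argument on an optimal solution of $\mathbf{P}$. Let $\mathcal{Q}^*$ be an optimal solution of $\mathbf{P}$ over the full set $\mathcal{L}$. The four categories LL, LH, HL and HH partition $\mathcal{L}$. Hence $\mathcal{Q}^*$ splits into the disjoint parts $\mathcal{Q}^*_a = \mathcal{Q}^* \cap (\mathcal{L}_\text{LL} \cup \mathcal{L}_\text{HL} \cup \mathcal{L}_\text{HH})$ and $\mathcal{Q}^*_b = \mathcal{Q}^* \cap \mathcal{L}_\text{LH}$, and $u(\mathcal{Q}^*) = u(\mathcal{Q}^*_a) + u(\mathcal{Q}^*_b)$ because the utilities are additive.

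The first step is to verify that each part is feasible for the corresponding restricted problem. Constraints \eqref{eq:ilpa}--\eqref{eq:ilpc} are packing constraints with nonnegative coefficients $b_\ell$, $c_\ell$ and $1$. Dropping service instances from a feasible set therefore keeps it feasible, so $\mathcal{Q}^*_a$ is feasible for $\mathbf{P}$ restricted to $\mathcal{L}_\text{LL} \cup \mathcal{L}_\text{HL} \cup \mathcal{L}_\text{HH}$, and $\mathcal{Q}^*_b$ is feasible for $\mathbf{P}$ restricted to $\mathcal{L}_\text{LH}$. By optimality of $\mathcal{Q}_1$ and $\mathcal{Q}_2$ in those restricted problems, $u(\mathcal{Q}_1) \ge u(\mathcal{Q}^*_a)$ and $u(\mathcal{Q}_2) \ge u(\mathcal{Q}^*_b)$.

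Next, apply Theorem~\ref{theorem:fastSA1} and Theorem~\ref{theorem:fastSA2} to get $u(\mathcal{S}_1) \ge \frac{1}{5}u(\mathcal{Q}_1)$ and $u(\mathcal{S}_2) \ge \frac{1}{5}u(\mathcal{Q}_2)$. Line~4 of $\mathtt{FastSA}$ returns the better of $\mathcal{S}_1$ and $\mathcal{S}_2$. Since a maximum is at least an average,
\begin{equation*}
\max\{u(\mathcal{S}_1),u(\mathcal{S}_2)\} \ge \tfrac{1}{2}\bigl(u(\mathcal{S}_1)+u(\mathcal{S}_2)\bigr) \ge \tfrac{1}{10}\bigl(u(\mathcal{Q}^*_a)+u(\mathcal{Q}^*_b)\bigr) = \tfrac{1}{10}u(\mathcal{Q}^*).
\end{equation*}

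It remains to check that the returned set is feasible for the original $\mathbf{P}$. Each $\mathcal{S}_t$ is built only by adding $\ell^r$ in line~11 when the constraints stay satisfied. Since each $\mathcal{S}_t$ is a subset of $\mathcal{L}$, feasibility for the restricted problem is exactly feasibility for $\mathbf{P}$.

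The argument has essentially no hard step once the two earlier theorems are assumed. The one point needing care is that the restricted problems in Theorems~\ref{theorem:fastSA1} and~\ref{theorem:fastSA2} keep the same capacities $B_j$ and $C_k$ and the same per-task constraint. This is what makes the restriction of $\mathcal{Q}^*$ a valid competitor, and I would state it explicitly.
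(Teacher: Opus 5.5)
Your proposal is correct and follows the paper's proof essentially step for step: split an optimal solution of $\mathbf{P}$ into its parts in $\mathcal{L}_\text{LL} \cup \mathcal{L}_\text{HL} \cup \mathcal{L}_\text{HH}$ and $\mathcal{L}_\text{LH}$ to get $u(\mathcal{Q}) \le u(\mathcal{Q}_1) + u(\mathcal{Q}_2)$, apply Theorems~\ref{theorem:fastSA1} and~\ref{theorem:fastSA2}, and bound $5u(\mathcal{S}_1) + 5u(\mathcal{S}_2)$ by $10\max\{u(\mathcal{S}_1), u(\mathcal{S}_2)\}$. Your explicit remarks that the packing constraints are downward-closed and that the returned set is feasible for $\mathbf{P}$ are details the paper leaves implicit, but they do not change the argument.
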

\begin{proof}
    Let $\mathcal{S}$ be the solution returned by $\mathtt{FastSA}$, $\mathcal{S}_1$ be the solution returned in line $2$ of $\mathtt{FastSA}$, and $\mathcal{S}_2$ be the solution returned in line $3$ of $\mathtt{FastSA}$. Let $\mathcal{Q}$ be the optimal solution of $\mathbf{P}$, $\mathcal{Q}_1$ be the optimal solution of $\mathbf{P}$ when only considering service instances in $\mathcal{L}_\text{LL} \cup \mathcal{L}_\text{HL} \cup \mathcal{L}_\text{HH}$, and $\mathcal{Q}_2$ be the optimal solution of $\mathbf{P}$ when only considering service instances in $\mathcal{L}_\text{LH}$.

    Any solution of $\mathbf{P}$ can be divided into a subset of $\mathcal{L}_\text{LL} \cup \mathcal{L}_\text{HL} \cup \mathcal{L}_\text{HH}$ and a subset of $\mathcal{L}_\text{LH}$. Therefore, $u(\mathcal{Q}) \le u(\mathcal{Q}_1) + u(\mathcal{Q}_2)$. Besides, $u(\mathcal{S}) = \max\{u(\mathcal{S}_1), u(\mathcal{S}_2)\}$. Based on Theorem \ref{theorem:fastSA1} and Theorem \ref{theorem:fastSA2}, we have $u(\mathcal{Q}_1) \le 5u(\mathcal{S}_1)$ and $u(\mathcal{Q}_2) \le 5u(\mathcal{S}_2)$. Then, we have $u(\mathcal{Q}) \le u(\mathcal{Q}_1) + u(\mathcal{Q}_2) \le 5u(\mathcal{S}_1) + 5u(\mathcal{S}_2) \le 10\cdot \max\{u(\mathcal{S}_1), u(\mathcal{S}_2)\} = 10u(\mathcal{S})$.
\end{proof}

In $\mathtt{FastSA}$, the function $\mathtt{Assign}$ has at most $|\mathcal{L}|$ recursive levels (i.e., the size of $\mathcal{L}$), and the accuracy values of at most $|\mathcal{L}|$ service instances are updated in each recursive level. Thus, the time complexity of $\mathtt{FastSA}$ is $\mathcal{O}(|\mathcal{L}|^2)$. In Subsection~\ref{subsec:fastSA-linear}, we show how to implement $\mathtt{FastSA}$ with a linear runtime complexity with respect to the size of $\mathcal{L}$, i.e., $\mathcal{O}(|\mathcal{L}|)$.

\subsection{Linear-Time Implementation for FastSA}\label{subsec:fastSA-linear}
Let $\ell^r \triangleq \left< n_i, m_j, m_k, b_{\ell^r}, c_{\ell^r}, \psi, a \right>$ denote the service instance considered in the $r$-th recursive level of the function $\mathtt{Assign}$ (line~$8$). According to Eq.~\eqref{eq:fastSA1}, the accuracy value of $\ell^r$ is reduced only when service instances in $\mathcal{L}_i^\text{task} \cup \mathcal{L}_j^\text{off} \cup \mathcal{L}_k^\text{pro}$ are processed in earlier recursive levels $s < r$. Let $\tilde{u}(\ell^r)$ denote the total accuracy value \textit{reduction} for $\ell^r$ when it is considered, i.e., $w^r(\ell^r) = u(\ell^r) - \tilde{u}(\ell^r)$. Based on the decomposition rule in Eq.~\eqref{eq:fastSA1}, $\tilde{u}(\ell^r)$ can be expressed as
\begin{equation}\label{eq:fastSA-linear1}
    \textstyle \tilde{u}(\ell^r) = U_i^\text{task} + 2 \cdot U_j^\text{off}\cdot \frac{b_{\ell^r}}{B_j} + 2 \cdot U_k^\text{pro}\cdot \frac{c_{\ell^r}}{C_k},
\end{equation}
where $U_i^\text{task}$, $U_j^\text{off}$, and $U_k^\text{pro}$ represent the reduced accuracy caused by service instances processed in earlier recursive levels that belong to the same task $n_i$ (first condition in Eq.~\eqref{eq:fastSA1}), are offloaded to the same ES $m_j$ (second or third condition in Eq.~\eqref{eq:fastSA1}), and are processed on the same ES $m_k$ (second or fourth condition in Eq.~\eqref{eq:fastSA1}), respectively. Specifically, $U_i^\text{task} = \sum_{\substack{s=1,  \ell^s \in \mathcal{L}_i^\text{task}}}^{r-1}w^s(\ell^s)$, $U_j^\text{off} = \sum_{\substack{s=1, \ell^s \in \mathcal{L}_j^\text{off} \setminus \mathcal{L}_i^\text{task}}}^{r-1}w^s(\ell^s)$, and $U_k^\text{pro} = \sum_{\substack{s=1, \ell^s \in \mathcal{L}_k^\text{pro} \setminus \mathcal{L}_i^\text{task}}}^{r-1} w^s(\ell^s)$.

Motivated by these observations, we simplify $\mathtt{FastSA}$ as shown in Algorithm~\ref{alg:fastSA-linear}. We introduce five auxiliary vectors, $\mathbf{U}_n$, $\mathbf{U}_m^\text{off}$, $ \mathbf{U}_{nm}^\text{off}$, $\mathbf{U}_m^\text{pro}$, and $\mathbf{U}_{nm}^\text{pro}$, to store accumulated accuracy reductions. Specifically, $\mathbf{U}_n[i]$, $\mathbf{U}_m^\text{off}[j]$, $\mathbf{U}_{nm}^\text{off}[i,j]$, $\mathbf{U}_m^\text{pro}[k]$, and $\mathbf{U}_{nm}^\text{pro}[i,k]$ store the accuracy decrements for service instances in $\mathcal{L}_i^\text{task}$, $\mathcal{L}_j^\text{off}$, $\mathcal{L}_i^\text{task}\cap \mathcal{L}_j^\text{off}$, $\mathcal{L}_k^\text{pro}$, and $\mathcal{L}_i^\text{task}\cap \mathcal{L}_k^\text{pro}$, respectively. In addition, the vector $\mathbb{L}_{candi}$ stores candidate service instances, and the set $\mathcal{S}$ stores the final solution.

\begin{algorithm}[t!]
    \caption{Linear-Time Implementation for $\mathtt{FastSA}$}\label{alg:fastSA-linear}

    \SetAlgoHangIndent{0pt} 
    \SetKwFunction{Assign}{$\mathtt{Assign}$}
    \SetKwProg{Fn}{}{:}{}
    \SetKwInOut{Input}{input}
    \SetKwInOut{Output}{output}
    \SetKw{Return}{return}
    \SetKw{Or}{or}
    \SetNoFillComment
    \SetAlgoNoEnd

    \tcc{\small \textsf{
    consider categories LL, HL, and HH in order
    }}
    $\mathcal{S}_1 \leftarrow$ \Assign{$[\text{LL}, \text{HL}, \text{HH}]$}\;
    $\mathcal{S}_2 \leftarrow$ \Assign{$[\text{LH}]$}\tcp*{\small \textsf{category LH only}}
    \leIf{$u(\mathcal{S}_1) \ge u(\mathcal{S}_2)$}{\Return $\mathcal{S}_1$}{\Return $\mathcal{S}_2$}

    \BlankLine
    \Fn{\Assign{CategoryList}}
    {
        \scalebox{0.92}{Initialize vectors $\mathbf{U}_n, \mathbf{U}_m^\text{off}, \mathbf{U}_{nm}^\text{off}, \mathbf{U}_m^\text{pro}, \mathbf{U}_{nm}^\text{pro}, \mathbb{L}_{candi}, \mathcal{S}$}\;
        \For{$category \in \text{CategoryList}$}{
            \For(\small \hspace{0.5cm} \textsf{// $\ell^r$: $r$-th element in $\mathcal{L}$}){$r \leftarrow 1$ \KwTo $|\mathcal{L}|$}
            {
                \lIf{$\ell^r$ is not in $category$}{\textbf{continue}}
                
                Suppose $\ell^r \triangleq \left< n_i, m_j, m_k, b_{\ell^r}, c_{\ell^r}, \psi, a_r \right>$\;
                Let \scalebox{1}{$U_i^\text{task} = \mathbf{U}_n[i], U_j^\text{off} = \mathbf{U}_m^\text{off}[j] - \mathbf{U}_{nm}^\text{off}[i,j]$,}
                and $ U_k^\text{pro} = \mathbf{U}_m^\text{pro}[k] - \mathbf{U}_{nm}^\text{pro}[i,k]$\;

                $\tilde{u}(\ell^r) = U_i^\text{task}$$+$$2 U_j^\text{off} \cdot \frac{b_{\ell^r}}{B_k}$$+$$2U_k^\text{pro} \cdot \frac{c_{\ell^r}}{C_k}$ (Eq.~\eqref{eq:fastSA-linear1})\;
                $w^r(\ell^r) = u(\ell^r) - \tilde{u}(\ell^r)$\;
                
                \lIf{$w^r(\ell^r) <= 0$}{\textbf{continue}}
                    
                Append $\ell^r$ into the end of vector $\mathbb{L}_{candi}$\;
                \scalebox{1}{$\mathbf{U}_n[i] $=$ \mathbf{U}_n[i] $+$ w^r(\ell^r)$,
                $\mathbf{U}_m^\text{off}[j] $=$ \mathbf{U}_m^\text{off}[j] $+$w^r(\ell^r)$,}
                \scalebox{0.93}{$\mathbf{U}_{nm}^\text{off}[i,j] $=$ \mathbf{U}_{nm}^\text{off}[i,j] $+$w^r(\ell^r)$,
                $\mathbf{U}_m^\text{pro}[k] $=$ \mathbf{U}_m^\text{pro}[k] $+$w^r(\ell^r)$,}
                $\mathbf{U}_{nm}^\text{pro}[i,k] $=$ \mathbf{U}_{nm}^\text{pro}[i,k] $+$w^r(\ell^r)$\;
            }
        }

        \For(\hspace{0.3cm} \scalebox{0.8}{\textsf{// $\ell^r$: $r$-th element in $\mathbb{L}_{candi}$}}){$r \leftarrow |\mathbb{L}_{candi}|$ \KwTo $1$}
        {
            \lIf{$\mathcal{S} \cup \{\ell^r\}$ is feasible}
            {
                $\mathcal{S} \leftarrow \mathcal{S} \cup \{\ell^r\}$
            }
        }
        
        \Return $\mathcal{S}$\;
    }
\end{algorithm}

Algorithm~\ref{alg:fastSA-linear} first calls the function $\mathtt{Assign}\text{([LL, HL, HH])}$ to handle service instances in categories LL, HL, and HH sequentially (line~$1$). For each category, $\mathtt{Assign}$ traverses the entire $\mathcal{L}$ and only considers the service instances belonging to the current category (lines~$6$--$8$). For each considered service instance $\ell^r$, the updated accuracy value $w^r(\ell^r)$ is computed according to Eq.~\eqref{eq:fastSA-linear1} (lines~$10$--$12$). If $w^r(\ell^r)$ is positive, $\ell^r$ is appended to $\mathbb{L}_{candi}$, and the auxiliary vectors are updated accordingly (lines~$13$--$15$). Then, candidates in $\mathbb{L}_{candi}$ are added to the solution $\mathcal{S}$ in reverse order. Algorithm~\ref{alg:fastSA-linear} subsequently calls $\mathtt{Assign}\text{([LH])}$ to schedule service instances in category LH (line~$2$). Finally, the solution achieving the higher total accuracy improvement is selected as the output of $\mathtt{FastSA}$.

The sequence of candidate service instances in $\mathbb{L}_{candi}$ generated by the $\mathtt{Assign}$ function in Algorithm~\ref{alg:fastSA-linear} exactly mirrors the sequence produced by the recursive $\mathtt{Assign}$ function in $\mathtt{FastSA}$. Moreover, the accuracy reduction applied to each instance remains identical. Therefore, \textbf{the theoretical guarantee established in Theorem~\ref{theorem:fastSA3} still holds.} Algorithm~\ref{alg:fastSA-linear} performs four traversals over $\mathcal{L}$, during which the accuracy value of each service instance is only updated once. Thus, the time complexity of Algorithm~\ref{alg:fastSA-linear} is $\mathcal{O}(|\mathcal{L}|) = \mathcal{O}(2 N M^2 A \cdot \min\{B, C\})$.

\section{Fault-Tolerant Offloading Framework} \label{sec:framework}
This section presents a fault-tolerant task offloading framework for real-time tasks. We first introduce the fault detection and recovery mechanism to handle both backhaul link and ES failures, and then present the task offloading workflow.

\subsection{Link-State Synchronization and Global Scheduler Election}\label{subsec:framework-bn} 

To maintain global network awareness, we adopt a flooding-based synchronization protocol (similar to OSPF~\cite{ospfv2moy}) to disseminate backhaul connectivity information among all ESs.

Each ES is associated with a gNB, a server, a scheduler, and a router. The gNB handles data offloading, and the server hosts services for EDs. The scheduler remains idle unless it is elected as the global scheduler, which then determines service placement and resource allocation for ED requests. The router is responsible for backhaul network synchronization. It maintains a Link State Advertisement (LSA) database that stores both the LSA generated by itself (referred to as the self-LSA) and LSAs received from other ESs (Fig.~\ref{fig:framework-lsa}). Each router records the \textit{link state} of its ES in the self-LSA, which includes the origin ES, a timestamp indicating the freshness of the link state, and a list of neighboring ESs it has detected. When a link state change is detected (e.g., a new neighbor is discovered or an existing one is disconnected), the router updates the timestamp and the list of neighboring ESs in the self-LSA, then broadcasts the updated self-LSA message to all directly connected ESs.

Upon receiving an LSA from another ES, the router compares the timestamp of the received LSA with that of the corresponding LSA (with the same origin ES) in its database. If the received LSA is newer, the router updates its database to maintain freshness and forwards the new LSA to all neighboring ESs \textit{except the one from which it was received}, as that ES already possesses the update; otherwise, the received LSA is discarded. This ensures that each updated LSA traverses every link in each direction at most once, thereby preventing redundant traffic and minimizing backhaul overhead. An example of this process is illustrated in Fig.~\ref{fig:framework-lsa}.


\begin{figure}
    \centering
    \includegraphics[width=0.95\linewidth, page=2]{figures/framework.pdf}
    \caption{An example of ES $m_1$ disseminates its self-LSA to other ESs. When the self-LSA of $m_1$ is updated, $m_1$ broadcasts the self-LSA to its neighboring ESs $m_2$ and $m_3$ (first propagation). In the second propagation, $m_3$ forwards the LSA to $m_2$, $m_4$, and $m_5$, while $m_2$ also forwards the LSA to $m_3$. During the third propagation, $m_5$ further forwards the LSA to $m_6$.}
    \label{fig:framework-lsa}
\end{figure}

Through this flooding process, once a link-state change is detected, all ESs within the same backhaul network (e.g., in which every pair of ESs is connected by a path) can be informed within a bounded time, as shown in the following.
\begin{proposition}\label{prop:convergence}
Let $\Delta$ denote the longest shortest path between any two ESs in a backhaul network (i.e., the diameter). The updated self-LSA of any ES can be disseminated to all other ESs within at most $\Delta$ hops of LSA message propagation.
\end{proposition}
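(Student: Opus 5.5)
We argue by induction on the number of propagation rounds, modeling the backhaul network as a connected undirected graph whose vertices are the ESs and whose edges are the backhaul links. Fix the origin ES $m_o$ whose self-LSA has just been updated, with new timestamp $\tau$. We call the initial broadcast by $m_o$ the first propagation, and in general the $t$-th propagation consists of the transmissions made by ESs that first accepted the $\tau$-LSA in the $(t-1)$-th propagation (with $m_o$ itself being the ES that "accepts" at round $0$). Let $D_t$ denote the set of ESs at hop distance exactly $t$ from $m_o$. The claim to prove is: \emph{every ES in $D_t$ holds the $\tau$-LSA in its database by the end of the $t$-th propagation}. Since every ES lies in some $D_t$ with $t \le \Delta$ (by definition of the diameter, the hop distance from $m_o$ to any ES is at most $\Delta$), this yields the proposition.

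The base case $t=1$ is immediate, because $m_o$ broadcasts to all directly connected ESs, i.e., to all of $D_1$. For the inductive step, take $m \in D_{t+1}$ and a neighbor $m' \in D_t$ on a shortest path from $m_o$. By the hypothesis, $m'$ holds the $\tau$-LSA by round $t$. Either $m$ already holds it, and we are done, or $m'$ received it as \emph{newer} at some round $s \le t$. In the latter case, $m'$ updated its database and forwarded the LSA at round $s+1 \le t+1$ to every neighbor except the sender. The sender cannot be $m$: otherwise $m$ would have held the LSA even earlier. Hence $m$ receives the LSA by round $t+1$. Since $\tau$ is the most recent timestamp for origin $m_o$, this LSA is newer than anything stored at $m$, so $m$ accepts it.

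The main point needing care is the acceptance and forwarding rule. We must check that an ES that has not yet seen timestamp $\tau$ always treats the incoming copy as newer, which relies on timestamps being strictly increasing for a given origin. We must also check that the "do not forward to the sender" and "discard if not newer" rules never block a first delivery. Both follow because discarding happens only at ESs that already hold the $\tau$-LSA, and such ESs are already counted as informed. We assume no further update of the same self-LSA occurs during the dissemination; otherwise we would argue about the latest timestamp instead. The bound of $\Delta$ hops then follows by taking $t = \max_m h_{om} \le \Delta$.
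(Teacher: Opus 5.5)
Your proof is correct. The paper gives no proof of Proposition~\ref{prop:convergence}. It relies only on the description of the flooding rule and the round-by-round example in Fig.~\ref{fig:framework-lsa}, and your induction over the distance layers $D_t$ is the argument implicit there, made rigorous.

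Your version also makes explicit several assumptions the paper leaves tacit:
\begin{itemize}
\item timestamps strictly increase for a given origin;
\item no newer self-LSA from the same origin is issued during the flood;
\item the topology is static and connected, with lossless delivery, while flooding;
\item the ``except the sender'' rule is handled correctly.
\end{itemize}
The real case split in your inductive step is whether the copy that $m'$ accepted came from $m$; if it did, $m$ is already informed. You also show that the bound is really the eccentricity of the origin, $\max_m h_{om} \le \Delta$.

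One further remark: the paper uses this proposition as a \emph{time} bound to size the scheduler-election timer. Your induction goes through unchanged in an asynchronous model if ``by round $t$'' is replaced by ``by time $t\lambda$'' for a per-hop delay bound $\lambda$. This holds even though the first copy to reach an ES may then travel along a longer path than the shortest one.
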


\textit{Global Scheduler Election.} Whenever an ES receives an LSA message, it initiates a global scheduler election timer using the timestamp contained in the message. If multiple LSAs are received from different ESs, the timer is set according to the largest timestamp among them, which represents the time when the most recent link-state change is detected. Since any LSA can be disseminated to all ESs within a bounded time (Proposition~\ref{prop:convergence}), an appropriate delay can be set for the timer based on the largest timestamp such that, upon expiration, all link-state updates are fully disseminated across the network.

Once the backhaul network is synchronized, each ES within the same backhaul holds an identical LSA database, which is used by Dijkstra's algorithm to determine the shortest paths among ESs in the backhaul network. A deterministic and self-consistent election procedure is then independently executed by all ESs in the same backhaul to select a common global scheduler without additional coordination. The election rule ensures a unique outcome within each backhaul: the ES connected to the largest number of neighboring ESs is selected as the global scheduler, and in case of a tie, the ES with the smallest identifier is chosen. This rule ensures consistency among all ESs in the same backhaul. When a backhaul network is partitioned into multiple disjoint backhaul networks due to backhaul link or ES failures, each backhaul independently elects its own global scheduler.

After a new global scheduler is elected, a periodic scheduling timer is initiated to repeatedly invoke the scheduling algorithm, which determines service deployment and resource allocation for pending ED requests by solving $\mathbf{P}$ based on the resource in its backhaul. When the backhaul network is partitioned, multiple global schedulers operate concurrently, each making decisions exclusively for its own backhaul.

\subsection{Failure Detection and Recovery}

Each ES connects neighboring ESs via Ethernet interfaces using the Point-to-Point Protocol (PPP). Each ES can have multiple interfaces (termed pppIf), each of which connects to at most one neighboring ES, while unused interfaces remain unconnected. At the system-design level, these point-to-point links represent dedicated backhaul connections between ESs. To support predictable end-to-end behavior, the wired backhaul is assumed to provide deterministic transmission with fixed data rates, as can be realized using TSN-capable Ethernet or similar technologies. Accordingly, backhaul delay is modeled as a deterministic function of data size and hop count, as defined in Section~\ref{sec:system-architecture}.

To monitor link availability, every ES \textit{periodically} broadcasts HELLO messages through all its pppIfs. Upon receiving a HELLO message, the neighboring ES replies with an acknowledgment (ACK). If the ACK is received within a predefined time interval, the corresponding link is marked as \textit{active}. Conversely, if no ACK is received for two consecutive HELLO message broadcasts (to avoid excessive sensitivity), the link is considered \textit{inactive}.

This framework considers both ES and backhaul link failures. When a link between two ESs fails, both ESs detect the connection loss. Upon confirmation after two consecutive missed ACKs, each ES updates its self-LSA. Similarly, if an ES (whether hosting the global scheduler or not) fails, all its neighboring ESs will detect the loss of connectivity and update their respective self-LSAs after the same confirmation interval. Then, each ES with updated-LSA broadcasts its self-LSA following the synchronization procedure in Subsection~\ref{subsec:framework-bn}.

Upon receiving an LSA, an ES (e.g., $m_k$) forwards the message following the procedure in Subsection~\ref{subsec:framework-bn} and updates its LSA database accordingly. In addition, ES $m_k$ terminates all services hosted on its server and sends termination grants to the tasks that have offloaded to it. The ES then releases its occupied bandwidth and computational resources, and reports its updated status to the newly elected global scheduler once it is ready. If ES $m_k$ previously hosts a global scheduler, the scheduler state is set to \textit{idle}. 

Furthermore, the discovery of a new neighboring ES automatically triggers a link-state update. Consequently, when a previously failed link or ES recovers, it is reintegrated into the existing topology. This failure detection and recovery mechanism provides strong robustness against both ES and backhaul link failures, ensuring continuous MEC service. 


\subsection{Task Offloading Workflow}\label{subsec:framework-offload}
A detailed task offloading workflow is illustrated in Fig.~\ref{fig:framework-offloading}. When an ED decides to offload a task (e.g., $n_i$), it first transmits a service request to its nearest ES (e.g., $m_j$) via the 5G wireless network, including its deadline $T_i$ and offload data sizes $\theta_i^1$, $\theta_i^2$, etc. Upon receiving the request, ES~$m_j$ temporarily stores it in a local buffer. Once the global scheduler is elected, all buffered requests are forwarded to it for scheduling.


The global scheduler is invoked periodically to continuously serve surrounding EDs. Each invocation defines a \textit{scheduling cycle}, and its duration is referred to as the \textit{scheduling interval}.

At the beginning of each scheduling cycle, every ED broadcasts an SRS to nearby gNBs. Upon receiving the SRS, each gNB selects an appropriate MCS to ensure reliable wireless transmission, thereby determining the achievable uplink data rate $\mu_{ij}$ between the ED and the ES (e.g., $m_j$) associated with that gNB. This uplink rate, together with the current abstracted resource availability at ES~$m_j$ (i.e., $B_j$ and $C_j$), is reported to the global scheduler.

Using the latest resource reports from all ESs, the global scheduler invokes the scheduling algorithm (e.g., $\mathtt{FastSA}$) to schedule pending requests by solving problem~$\mathbf{P}$ (defined in Section~\ref{sec:system-architecture}). Once a scheduling decision is obtained, an ES grant is issued to the server associated with each designated process-ES (e.g., $m_k$) to initialize the corresponding service. After successful service initialization, an offload grant is forwarded to the gNB associated with the designated offload-ES (e.g., $m_j$). The gNB interprets the offload grant to perform PHY/MAC-layer resource block scheduling and to notify the corresponding ED that task $n_i$ may begin offloading.

When a job instance (i.e., task input data) of $n_i$ is generated, it is first offloaded to the gNB associated with offload-ES~$m_j$, and then forwarded to the server associated with process-ES~$m_k$ via the backhaul network for processing. After processing completes, the result is returned to ES $m_j$ and then delivered to the ED. During offloading, the gNB continuously monitors the wireless channel quality. If channel conditions degrade such that timely offloading cannot be guaranteed, the gNB temporarily suspends the offloading process until channel quality recovers. During this suspension period, the task uses locally computed results to ensure safe operation; however, no utility (e.g., accuracy improvement) is accrued in this case.

\textit{The scheduling cycle repeats until a link-state change is detected in the backhaul network.} After the network is resynchronized and a new global scheduler is elected, a new scheduling cycle begins.

\textit{Implementation.} We implement the proposed framework on top of the open-source MEC simulator for real-time applications, \textit{mecRT}~\cite{gao2025real}, which provides a basic task offloading framework. However, the original \textit{mecRT} only supports scenarios where all ESs are directly connected to a fixed centralized global scheduler. It lacks data forwarding, backhaul network synchronization, and fault detection and recovery mechanisms. In this work, we extend \textit{mecRT} by integrating our fault-tolerant framework\footnote{The source code is available at \url{https://github.com/gaochuanchao/mecRT}.}, thereby enabling general and resilient MEC simulation scenarios.

\begin{figure}
    \centering
    \includegraphics[width=0.95\linewidth, page=3]{figures/framework.pdf}
    \caption{Task Offloading Workflow (task $n_i$, ESs $m_j, m_k$ are used as example).}
    \label{fig:framework-offloading}
\end{figure}

\section{Experimental Evaluation}\label{sec:experiment}
This section evaluates $\mathtt{FastSA}$ by comparing it with three state-of-the-art benchmark algorithms from the literature. We first examine $\mathtt{FastSA}$ in a fault-free MEC environment and then analyze the impact of backhaul link and ES failures on $\mathtt{FastSA}$. All experiments are conducted on a desktop PC with an Intel(R) Xeon(R) W-2235 CPU and 32 GB of RAM.

\subsection{Simulation Setup}
We evaluate $\mathtt{FastSA}$ using the simulator \textit{mecRT}~\cite{gao2025real}, which provides detailed 5G network modeling and task offloading control logic.
The simulation emulates an intelligent transportation scenario, where MEC services are provided to vehicles. Real taxi GPS trajectory data \cite{taxi-data} is used to model vehicle mobility. Specifically, we extract all taxi trajectories passing through a $1$ km $\times$ $1$ km urban area in Shanghai over a $15$-minute period, during which $80$ taxis traverse the area.

We deploy $15$ ESs evenly along the roads to provide MEC services. Each ES manages a 5G network with $270$ BUs (i.e., $B_k=270$), delivering a total data rate of $37$ MB/s (obtained from profiling) under optimal channel quality. Additionally, each ES is equipped with a GPU server containing $20$ CUs (i.e., $C_k=20$), where the GPU type is randomly selected from three models (RTX3090, RTX4090, RTX4500). Notably, for the same allocated CUs, the execution time of a given service varies across different GPU types. We consider four image classification services (ResNet-101/152 \cite{he2016identity}, Convnext-L \cite{trockman2022patches}, RegNet-Y-L \cite{radosavovic2020designing}), and profile their execution on each GPU type.  Furthermore, the service initialization time on various ESs is randomly sampled from the range of $[10, 50]$ ms. 

Each vehicle runs multiple applications (tasks). For each task $n_i$, its input data is randomly sampled from the ImageNet ILSVRC dataset \cite{ILSVRC15}, with data sizes ranging from $0.14$MB to $0.6$MB. The local application model is randomly selected from three lightweight image classification networks: ResNet-18 \cite{he2016identity}, GoogleNet \cite{szegedy2015going}, and RegNet-X-S \cite{radosavovic2020designing}. We profile the Nvidia Jetson Nano, an embedded GPU-based device, to obtain local execution time $d_i^\text{loc}$ and accuracy for each local model. The period is set to $60$, $70$, or $80$ms for tasks using model ResNet-18, GoogleNet, and RegNet-X-S, respectively.

\textit{Benchmark Algorithms}. We compare $\mathtt{FastSA}$ with three related algorithms adapted from prior studies: $\mathtt{Greedy}$ \cite{gao2022deadline}, $\mathtt{Game}$ \cite{li2019cooperative}, and $\mathtt{Graph}$ \cite{gao2025energy}. Among them, only $\mathtt{Graph}$ provides a theoretical guarantee of $\frac{1-\alpha}{2}$, where $\alpha$ denotes the maximum fraction of an ES's capacity that can be allocated to a single task. In contrast, $\mathtt{Greedy}$ and $\mathtt{Game}$ offer no such theoretical guarantee for problem $\mathbf{P}$. Since $\mathtt{FastSA}$, $\mathtt{Greedy}$, and $\mathtt{Game}$ do not require $\alpha$, we set $\alpha=\frac{1}{2}$ for $\mathtt{Graph}$ in our experiment.
\begin{itemize}
    \item \textbf{$\mathtt{Greedy}$}: For each $\ell \in \mathcal{L}$, $\mathtt{Greedy}$ defines resource efficiency as $e(\ell) = u(\ell)/(\frac{b_{\ell}}{B_j} \times \frac{c_\ell}{C_k})$, where $m_j$ and $m_k$ are the offload-ES and process-ES of $\ell$. It then processes all $\ell \in \mathcal{L}$ in non-increasing order of $e(\ell)$, selecting each service whenever feasibility constraints are met.
    \item \textbf{$\mathtt{Game}$}: $\mathtt{Game}$ models $\mathbf{P}$ as a non-cooperative game, in which each vehicle acts as a player, and selects the $\ell \in \mathcal{L}$ that maximizes total utility in each iteration.
    \item \textbf{$\mathtt{Graph}$}: $\mathtt{Graph}$ converts $\mathbf{P}$ into a tripartite graph matching problem and solves it with an LP rounding method.
\end{itemize}

\addtolength{\dbltextfloatsep}{-3mm}
\addtolength{\abovecaptionskip}{-2mm}

\begin{figure*}[t]
    \centering
    \subfigure[]{\includegraphics[width=0.32\linewidth]{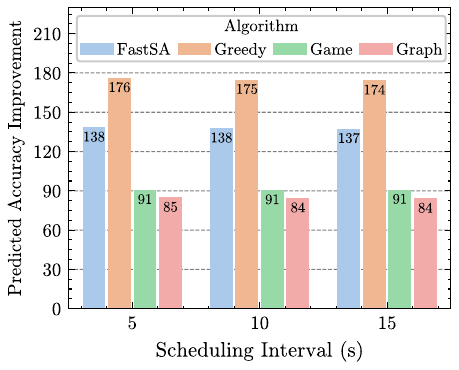} \label{fig:exp-normal-1}} 
    \subfigure[]{\includegraphics[width=0.32\linewidth]{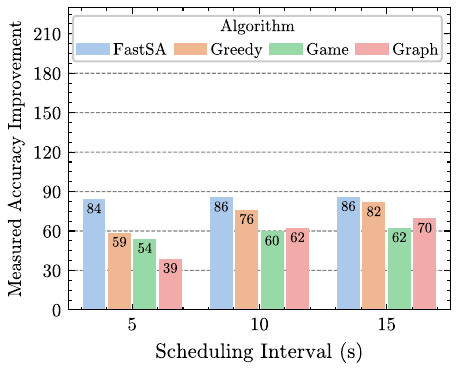} \label{fig:exp-normal-2}}
    \subfigure[]{\includegraphics[width=0.32\linewidth]{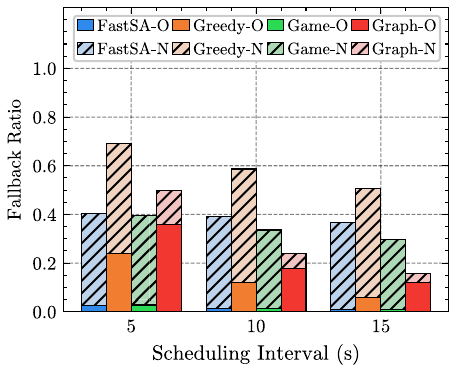} \label{fig:exp-normal-3}} 
    \caption{In fault-free MEC, for different scheduling intervals, the average (a) \textit{predicted} and (b) \textit{measured} accuracy improvements, and (c) fallback ratio.}
    \label{fig:exp-normal-a}
\end{figure*}

\begin{figure*}[t]
    \centering
    \subfigure[]{\includegraphics[width=0.4\linewidth]{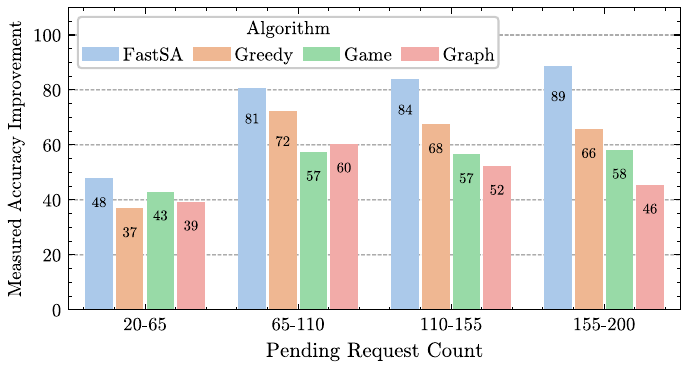} \label{fig:exp-normal-4}} 
    \hspace{1cm}
    \subfigure[]{\includegraphics[width=0.4\linewidth]{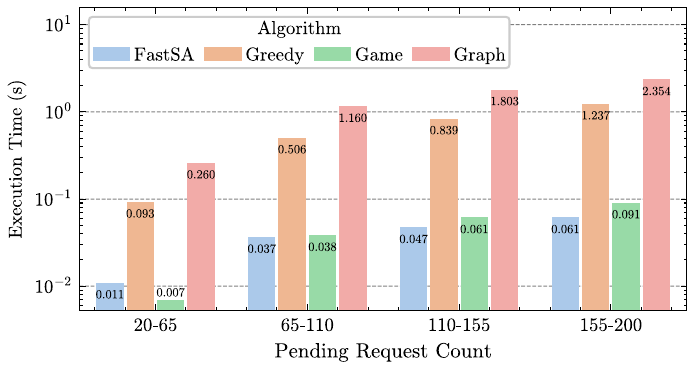} \label{fig:exp-normal-5}}
    \caption{In fault-free MEC, for different request counts, the results for (a) measured accuracy improvement and (b) algorithm execution time.}
    \label{fig:exp-normal-b}
\end{figure*}

Scheduling algorithm performance is assessed using two metrics: predicted and measured accuracy improvement. That is, the improved image classification accuracy as a result of successfully offloading the task and receiving the results within the deadline when compared to executing the task only locally on the ED. The \textit{predicted accuracy improvement} represents the result estimated by the algorithm itself, assuming static network conditions for the whole scheduling interval and neglecting execution overhead. In contrast, the \textit{ measured accuracy improvement} is obtained from simulation results based on the actual offloaded job instances, capturing the impact of network fluctuations and algorithm execution overhead. Since the simulation lasts $900$ seconds, the results are reported as the total accuracy improvement per second for consistency.

\begin{figure}[t]
    \centering
    \subfigure[]{\includegraphics[width=0.48\linewidth]{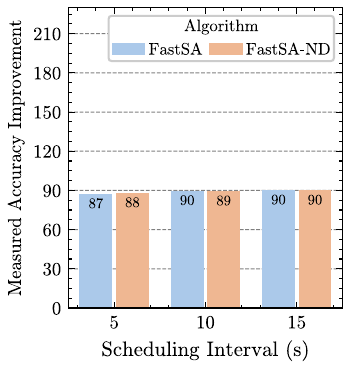} \label{fig:exp-ablation-2}}
    \subfigure[]{\includegraphics[width=0.48\linewidth]{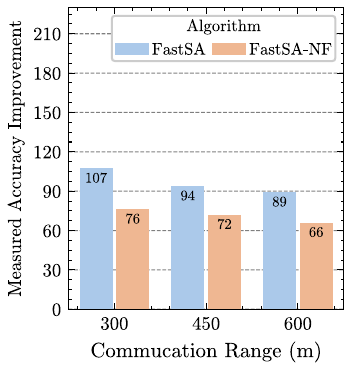} \label{fig:exp-ablation-4}}
    \caption{Ablation results for $\mathtt{FastSA}$ in fault-free MEC. (a) The impact of service-instance division under varying scheduling intervals; (b) the impact of task forwarding under varying effective communication ranges. Measured accuracy improvements are shown.}
    \label{fig:exp-ablation}
\end{figure}

\subsection{Result Discussion}\label{subsec:experiment-result}
Two sets of experiments are conducted. Since the scheduling algorithm is invoked once for each scheduling interval, the first experiment evaluates $\mathtt{FastSA}$ in a fault-free MEC under different scheduling intervals. The second experiment introduces backhaul link and ES failures during simulation to assess $\mathtt{FastSA}$ under our fault detection and recovery framework.

\subsubsection{Fault-Free MEC}
In the first experiment, we evaluate $\mathtt{FastSA}$ in a fault-free MEC under different scheduling intervals: $5$s, $10$s, and $15$s. The average predicted and measured accuracy improvements are presented in Figs.~\ref{fig:exp-normal-1} and \ref{fig:exp-normal-2}, respectively. On average, the predicted accuracy improvement of $\mathtt{FastSA}$ is $27\%$ lower than that of $\mathtt{Greedy}$, but $34.2\%$ and $38.7\%$ higher than those of $\mathtt{Game}$ and $\mathtt{Graph}$, respectively. Besides, the measured accuracy improvement of $\mathtt{FastSA}$ surpasses $\mathtt{Greedy}$, $\mathtt{Game}$, and $\mathtt{Graph}$ by $15.3\%$, $31.1\%$, and $33.3\%$, respectively. Although $\mathtt{FastSA}$ achieves the second-highest predicted accuracy improvement, its measured performance significantly exceeds that of all benchmark algorithms, demonstrating its superior resource utilization efficiency in realistic, dynamic MEC environments.

\begin{figure*}[t]
    \centering
    \subfigure[]{\includegraphics[width=0.4\linewidth]{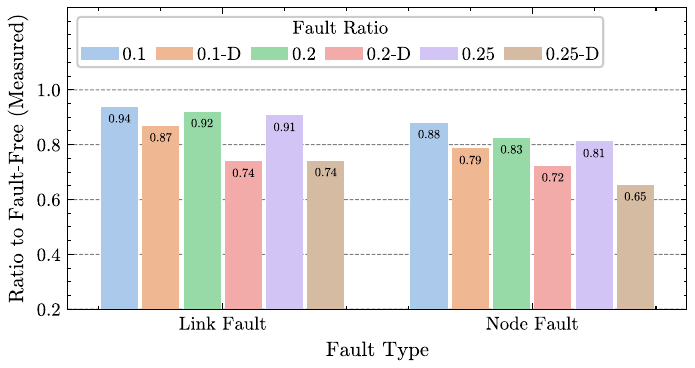} \label{fig:exp-fault-1}} 
    \hspace{1cm}
    \subfigure[]{\includegraphics[width=0.4\linewidth]{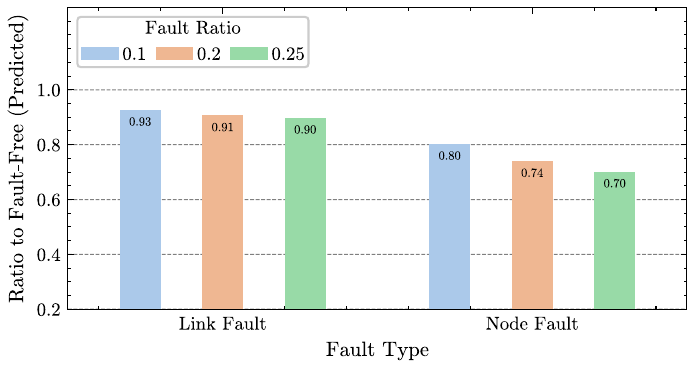} \label{fig:exp-fault-2}}
    \caption{In fault-injected MEC, the normalized average (a) measured and (b) predicted accuracy improvement for different fault types and fault ratios.}
    \label{fig:exp-fault}
\end{figure*}

A comparison between Figs.~\ref{fig:exp-normal-1} and \ref{fig:exp-normal-2} reveals a clear discrepancy between the predicted and measured accuracy improvements. On average, the measured accuracy improvements of $\mathtt{FastSA}$, $\mathtt{Greedy}$, $\mathtt{Game}$, and $\mathtt{Graph}$ are $38.1\%$, $58.7\%$, $35.2\%$, and $32.6\%$ lower than their predicted values, respectively. The underlying cause is shown in Fig.~\ref{fig:exp-normal-3}, which illustrates the average fallback ratio in each scheduling interval. The \textit{fallback ratio} is defined as the proportion of jobs successfully served (i.e., offloaded and completed within deadlines) relative to the total number of generated jobs for tasks that are granted in each scheduling interval. It consists of two components: ``-O'' in Fig.~\ref{fig:exp-normal-3} denotes jobs not offloaded due to scheduling algorithm overhead, while ``-N'' corresponds to those not offloaded due to network quality fluctuations (the channel quality decides the MCS selection, and hence the data offloading rate). As shown, the average fallback ratios of $\mathtt{FastSA}$, $\mathtt{Greedy}$, $\mathtt{Game}$, and $\mathtt{Graph}$ are $38.7\%$, $59.5\%$, $34.3\%$, and $29.9\%$, respectively. Moreover, the fallback ratios of $\mathtt{FastSA}$, $\mathtt{Greedy}$, and $\mathtt{Game}$ are primarily influenced by network variations, whereas that of $\mathtt{Graph}$ is dominated by algorithmic overhead. 
Besides, the $38.7\%$ fallback ratio of $\mathtt{FastSA}$ highlights a promising research direction: incorporating wireless network variation awareness into the problem formulation to enhance robustness against such dynamics.

We further examine how the measured accuracy improvement and execution time vary with the number of pending requests, as shown in Figs.~\ref{fig:exp-normal-4} and \ref{fig:exp-normal-5}. In Fig.~\ref{fig:exp-normal-4}, the measured accuracy improvement of all algorithms generally increases with the request count, except for $\mathtt{Graph}$, whose performance degrades due to its scheduling overhead (and hence a higher fallback ratio). $\mathtt{FastSA}$ consistently outperforms $\mathtt{Greedy}$, $\mathtt{Game}$, and $\mathtt{Graph}$, with the performance gap widening as the number of requests grows. Meanwhile, the average execution time (Fig.~\ref{fig:exp-normal-5}) of $\mathtt{FastSA}$ is $0.05$s, which is $18.4\times$, $1.4\times$, and $37.5\times$ faster than $\mathtt{Greedy}$, $\mathtt{Game}$, and $\mathtt{Graph}$, respectively. These results demonstrate the superior practical performance and runtime efficiency of $\mathtt{FastSA}$.

\subsubsection{Ablation Evaluation in Fault-Free MEC} In the second experiment, we evaluate the impact of service-instance division and task forwarding on $\mathtt{FastSA}$ in a fault-free MEC setting. The average measured accuracy improvement under different scheduling intervals is shown in Fig.~\ref{fig:exp-ablation-2}. $\mathtt{FastSA}$-$\mathtt{ND}$ denotes a variant of $\mathtt{FastSA}$ that processes all service instances in a unified category order (LL, LH, HL, HH) without division. As shown, $\mathtt{FastSA}$ and $\mathtt{FastSA}$-$\mathtt{ND}$ achieve similar measured accuracy improvements across all scheduling intervals, indicating that instance division does not incur noticeable performance loss. Meanwhile, $\mathtt{FastSA}$ additionally provides a constant-factor approximation guarantee, improving theoretical robustness without sacrificing practical performance.

We next evaluate the impact of task forwarding. Since 5G coverage is spatially limited, enabling forwarding increases the flexibility of task placement and allows the system to better utilize ES resources. Here, we consider different effective gNB communication ranges, where a task may only be offloaded to a gNB if its ED lies within the gNB’s communication range. The measured accuracy improvement under different effective gNB communication ranges is shown in Fig.~\ref{fig:exp-ablation-4}, where $\mathtt{FastSA}$-$\mathtt{NF}$ disables task forwarding. On average, $\mathtt{FastSA}$ achieves $26.2\%$ higher measured accuracy improvements than $\mathtt{FastSA}$-$\mathtt{NF}$, indicating that task forwarding effectively improves performance by enabling more flexible use of ES resources.

\subsubsection{Fault-Injected MEC} In the third experiment, we evaluate the accuracy improvement of $\mathtt{FastSA}$ under the proposed fault detection and recovery framework by manually injecting backhaul link faults (termed \textit{link fault}) and ES faults (termed \textit{node fault}). The simulation runs for $900$s with a scheduling interval of $10$s. Three fault ratios are considered for both link and node faults: $0.1$, $0.2$, and $0.25$. For each ratio, randomly sampled link or node faults are injected into the MEC in the middle of every $100$s time interval and are removed at the end of the same interval, e.g., during the time interval $[0,100]$, a fault is injected at $50$s and cleared at $100$s. 

In the simulation, the period of the HELLO message for link state monitoring is set to $100$ms. The recovery time is defined as the duration between fault injection and the election of a new global scheduler. During this interval, all tasks are processed locally.
Simulation results show that the recovery time does not exceed $108$ms, which is negligible compared to the $10$s scheduling interval. This demonstrates the superior efficiency of our framework in fault detection and recovery.

The average measured accuracy improvement is presented in Fig.~\ref{fig:exp-fault-1}, where all values are normalized to the fault-free baseline. For comparison, results with the fault detection and recovery framework disabled are also included (denoted as ``-D'' in Fig.~\ref{fig:exp-fault-1}). When the framework is disabled, the ES hosting the global scheduler is excluded from fault injection; otherwise, offloading would not be possible. As shown in Fig.~\ref{fig:exp-fault-1}, under link faults, the average measured accuracy improvements with and without the framework drop to $92.2\%$ and $78.3\%$ of the fault-free baseline, respectively. With the framework enabled, performance remains close to the fault-free level and only slightly declines as the fault ratio increases. This is because link failures do not reduce the total ES resources in the MEC, and the backhaul network can re-route data through alternative paths after resynchronization. In contrast, disabling the framework leads to a significant drop in accuracy improvement as the fault ratio grows, as the scheduler is not aware of the infeasible backhaul paths, and deadlines will be missed for job instances routed through those paths.

For node faults, as shown in Fig.~\ref{fig:exp-fault-1}, the measured accuracy improvement decreases with higher fault ratios, with a notably greater degradation when the framework is disabled. On average, the measured accuracy improvements with and without the framework are $83.9\%$ and $72.1\%$ of the fault-free baseline, respectively. Node faults result in more severe performance degradation than link faults because they directly reduce the available ES resources in the MEC. These results verify the effectiveness of our framework in maintaining system performance under both link and node faults.

The average predicted accuracy improvement is presented in Fig.~\ref{fig:exp-fault-2}, where all results are normalized to the fault-free baseline. On average, the predicted accuracy improvement decreases to $91\%$ and $74.8\%$ of the fault-free baseline under link and node faults, respectively. 
A comparison between Figs.~\ref{fig:exp-fault-1} and \ref{fig:exp-fault-2} indicates that the reduction in measured accuracy improvement is smaller than that in the predicted values, particularly under node faults. This discrepancy arises because, after a fault is injected, fewer tasks are granted for offloading. Consequently, network interference due to concurrent data offloading is reduced, leading to a smaller fallback ratio caused by network quality fluctuations.


\section{Conclusion}\label{sec:conclusion}
This paper investigated a forwarding-enabled DOAP in MEC, denoted as $\mathbf{P}$, which jointly optimizes task offloading and resource allocation under wireless bandwidth and computational resource constraints to maximize the total inference accuracy improvement. To solve $\mathbf{P}$, we proposed $\mathtt{FastSA}$, a linear-time $\frac{1}{10}$-approximation algorithm that offers the first constant-factor guarantee for the forwarding-enabled DOAP. We also designed a fault-tolerant task offloading framework capable of detecting and recovering from backhaul link and ES failures. Experimental results show that $\mathtt{FastSA}$ outperforms all benchmark algorithms in fault-free settings while achieving the lowest execution time, and that the proposed fault-handling framework effectively preserves service quality under fault-injected scenarios. Lastly, we observed a gap between the predicted and measured accuracy improvements of $\mathtt{FastSA}$ due to wireless network quality fluctuations. As future work, we plan to incorporate wireless variation awareness into the problem formulation to enhance robustness against such dynamics.

\newpage

\bibliographystyle{IEEEtran}
\bibliography{IEEEabrv,reference}

\end{document}